\documentclass[10pt, twocolumn]{IEEEtran}

\usepackage{epsfig,latexsym}
\usepackage{amsmath}
\usepackage{amssymb}
\usepackage{subfigure}
\usepackage[colorlinks]{hyperref}
\usepackage{indentfirst}
\usepackage{times}
\usepackage{fancyhdr}
\usepackage{lastpage}
\usepackage{bigints}
\usepackage{subfigure}
\usepackage{graphicx}
\usepackage{amsthm}

\newtheorem{corollary}{Corollary}
\newtheorem{lemma}{Lemma}

\usepackage[noend]{algpseudocode}
\usepackage{algorithmicx,algorithm}
\usepackage{color}
\usepackage{ulem}

\begin{document}
\title{Stochastic Geometry Analysis of Uplink CUMA-Enabled Cellular Networks}

\author{Chenguang Rao, 
Kai-Kit Wong~\IEEEmembership{Fellow,~IEEE},
Xusheng Zhu,
Hanjiang Hong,~\IEEEmembership{Member,~IEEE},\\ 
Chan-Byoung Chae, \emph{Fellow, IEEE}, and 
Ross Murch, \emph{Fellow, IEEE}
\vspace{-8mm}

\thanks{The work of K. K. Wong is supported by the Engineering and Physical Sciences Research Council (EPSRC) under grant EP/W026813/1.}
\thanks{The work of C.-B. Chae was supported by the Korean Government under Grant IITP-2025-RS-2024-00428780 and Grant IITP-RS-2026-25489110.}
\thanks{The work of R. Murch was supported by the Hong Kong Research Grants Council Area of Excellence Grant AoE/E-601/22-R.}

\thanks{C. Rao, K. K. Wong, X. Zhu, and H. Hong are with the Department of Electronic and Electrical Engineering, University College London, London WC1E 7JE, United Kingdom. K. K. Wong is also affiliated with Yonsei Frontier Laboratory, Yonsei University, Seoul, 03722, Republic of Korea (e-mail:  \{chenguang.rao,kai\text{-}kit.wong,xusheng.zhu,hanjiang.hong\}@ucl.ac.uk).}
\thanks{C. B. Chae is with School of Integrated Technology, Yonsei University, Seoul, 03722, Republic of Korea (e-mail: cbchae@yonsei.ac.kr).}
\thanks{R. Murch is with the Department of Electronic and Computer Engineering, Hong Kong University of Science and Technology, Clear Water Bay, Hong Kong SAR, China (e-mail: eermurch@ust.hk).}

\thanks{Corresponding author: Kai-Kit Wong.}
}

\maketitle
\begin{abstract}
Uplink cellular networks are interference-dominated but interference channel state information (CSI) is rarely available at scale. The emerging fluid antenna system (FAS) concept, which provides additional spatial degrees of freedom through multi-port reconfiguration, offers a promising alternative to CSI-intensive multi-antenna processing. Building on this concept, compact ultra-massive arrays (CUMA) exploit large-scale port selection with low implementation complexity. In each uplink transmission, CUMA activates a subset of ports based on only the desired-link CSI and combines the selected ports via simple superposition, yielding coherent enhancement of the desired user signal, while inter-cell interference aggregates largely non-coherently due to the random superposition effect. Consequently, CUMA is well suited to multi-cell uplink scenarios where CSI is limited. In this paper, we analyze uplink CUMA in multi-cell cellular networks using a stochastic geometry framework. We derive a tight approximate expression for the signal-to-interference ratio (SIR) coverage probability, and further characterize the average user rate and cell sum-rate. The analysis quantifies how key design parameters impact performance and reveals the scaling behavior with network densification. Simulation results validate the accuracy of the derived expressions and show that uplink CUMA achieves competitive, and often superior, performance relative to conventional schemes under practical CSI constraints, highlighting its potential as a low-complexity, hardware-efficient uplink solution for future large-scale cellular networks.
\end{abstract}

\begin{IEEEkeywords}
Compact ultra-massive array (CUMA), fluid antenna system (FAS), fluid antenna multiple access (FAMA), stochastic geometry, cellular network, uplink. 
\end{IEEEkeywords}

\vspace{-2mm}
\section{Introduction}\label{sec:intro}
\IEEEPARstart{C}{ellular} networks have consistently been a key technology in wireless communication research due to their ability to provide wide-area coverage, efficient spectrum reuse, and flexible multiple access \cite{CN1,CN2,CN3}. As wireless systems continue to evolve from fifth generation (5G) toward sixth generation (6G) and beyond, cellular networks are expected to support extreme massive connectivity, hyper-reliable low-latency transmission, immersive services, and a broad range of intelligent applications \cite{6G1,6G2,Tariq-2020,6G4}. In such a vision, the spatial density of both base stations (BSs) and user equipments (UEs) will continue to increase. These trends make cellular networks more powerful, but also intensify one of their most challenges: significant and non-negligible interference. In cellular communications, multiple UEs are likely served with one BS over limited spectral resources. Since aggressive frequency reuse is essential for achieving high spectral efficiency, the signal at each transmission point is inevitably corrupted not only by intra-cell interference from UEs in the same cell, but also by inter-cell interference (ICI) generated by active UEs in neighboring cells. In dense deployments, such interference may become the dominant performance bottleneck. This problem is particularly pronounced in multi-cell uplink systems, where the network must simultaneously accommodate many distributed UEs with heterogeneous channel conditions \cite{CNI}. Thus, an important question in uplink cellular design is how to achieve reliable and effective multiuser reception under strong interference, while keeping the signal processing complexity and hardware burden within practical limits.

To deal with this issue, a wide variety of receiver architectures and network paradigms have been investigated. In conventional multiuser multiple-input multiple-output (MIMO) systems, linear receiver methods such as maximal-ratio combining (MRC), zero-forcing (ZF), regularized zero-forcing (RZF), and minimum mean-square error (MMSE) receivers are widely adopted because of their trade-offs between complexity and interference suppression capability \cite{CV1,CV2}. In addition to linear processing techniques, other approaches have also been explored. For instance, non-orthogonal multiple access (NOMA) and rate-splitting multiple access (RSMA) schemes employ successive interference cancellation (SIC) at the receiver to improve spectral efficiency by decoding users in a specific order \cite{NOMA,RSMA}. These approaches aim to exploit additional spatial or decoding degrees of freedom to further enhance interference management performance. 

In massive MIMO systems, more advanced variants such as multi-cell MMSE have also been studied to exploit broader channel situation information (CSI) for enhanced interference mitigation \cite{CV3}. Beyond the conventional approaches, cell-free massive MIMO has emerged as another potential solution, in which many distributed access points cooperatively serve the users simultaneously thereby improving spatial diversity and reducing cell-edge performance degradation. Despite their significant advantages, the above solutions also face limitations when the network scale becomes large and the interference environment is highly complicated. Specifically, most of these schemes rely on sufficiently accurate instantaneous CSI, and their performance can be sensitive to pilot overhead, pilot contamination, channel estimation errors, and CSI exchange across processing units \cite{ICI1,ICI2}. Furthermore, these schemes usually require high computational complexity and hardware burden. For example, MMSE-based receivers typically rely on a large number of radio-frequency (RF) chains, while cell-free massive MIMO requires dense deployment of distributed access points and extensive coordination among them. As a result, while these approaches are powerful, there is strong motivation to seek an alternative uplink reception strategy that retains robustness against interference but rely on much less CSI, simpler hardware, and lower processing complexity.

Recently, fluid antenna system (FAS) has attracted increasing research interest as an emerging technology that introduces new spatial freedom into wireless communications \cite{wong2020_fas_limits,wong2021_fas_twc}. Different from conventional fixed-position antennas, an FAS provides multiple candidate ports within a compact physical aperture and allows the communication devices to dynamically activate favorable ports according to the instantaneous channel condition \cite{FAS1,FAS2,FAS3,FAS4,hong2026_fas_survey,FAS_wu_tuo1,FAS5}. The FAS concept is hardware-agnostic but is greatly motivated by recent advances in mechanically movable elements \cite{zhu2024_fas_history}, liquid-based antennas \cite{shen2024_surfacewave_fas,wang2026_em_reconfig_fas}, metamaterial-based apertures \cite{Zhang-jsac2026,liu2025_meta_fluid_optics}, and pixel-based structures \cite{zhang2025_pixel_reconfig,liu2025_wideband_pixel_fas,wong2026_pixel_meet_fas}. With reconfigurability in antenna position, fluid antenna multiple access (FAMA) was developed to enable interference mitigation \cite{FAMA3}. 

The basic idea of FAMA is to exploit the spatial fluctuation of the received signals across different ports, such that the receiver can select appropriate ports where the desired signal is enhanced while the interference is weakened. In its early form, often referred to as fast FAMA, the receiver performs port selection on a per-symbol basis to maximize the instantaneous signal-to-interference-plus-noise ratio (SINR), which achieves significant performance gains but requires frequent switching and high complexity \cite{wong2022_fama,FAMA1}. To improve practicality, slow FAMA was later proposed, where port selection is performed over a much longer time scale, such as per channel coherence block, thereby reducing hardware and signaling burden while retaining part of the performance benefits \cite{FAMA2,FAMA6}. Building upon these foundations, various enhanced FAMA techniques have been developed via channel coding \cite{hong2025coded,hong20255gcoded,FAMA4} and learning-based data processing \cite{waqar2026_turbocharging_fama,FAMA7} to further improve performance and adaptability. A recent work in \cite{FAMA8} further demonstrates that the BS can operate without CSI for multiple access in the uplink if UEs are each equipped with FAS.

Comparing the various FAMA schemes, it is always about balancing between multiplexing capability and complexity. In this regard, the compact ultra-massive array (CUMA) architecture is often considered the best approach \cite{CUMA1}, as it keeps the complexity and CSI requirement low, similar to slow FAMA, and achieves much higher multiplexing gain than slow FAMA. The key principle of CUMA is to focus on maximizing the desired signal power without explicitly considering interference. Based on this idea, CUMA activates a subset of ports and combines them through a simple superposition rule, thereby achieving constructive signal enhancement with low hardware complexity. Meanwhile, since the port selection depends only on the desired channel, the interference signals experience essentially random superposition across the selected ports and as a result, do not benefit from the signal combining in the analog domain. This design avoids the requirement for interference awareness or processing, such as instantaneous CSI, multiuser decoding, or complex beamforming optimization, and thus significantly reduces both computational complexity and hardware requirements. Recent studies have considered enhancements for CUMA \cite{CUMA2,CUMA3,CUMA4}, illustrating strong potential in supporting massive access networks. 

A recently work in \cite{CUMA_SG} has already introduced the concept of uplink CUMA in a multi-cell environment and showed by simulation results that CUMA can provide strong interference resilience while requiring only low-complexity receiver operations and limited CSI. A particularly attractive feature of uplink CUMA is that its port-selection mechanism depends only on the desired user's channel observed at the serving BS, rather than on the instantaneous CSI of out-of-cell interferers. Nevertheless, although the mentioned work has revealed the practical potential of uplink CUMA, it still leaves an important issue: the lack of analytical performance characterization. At present, the benefits of uplink CUMA in multi-cell networks are mainly supported by simulation evidence, while the statistical behaviors of key metrics such as the desired signal power, the SINR, and the achievable rate remain insufficiently studied. Without such analytical results, it is difficult to understand how the performance scales with device or environment parameters, such as the number of antenna ports, the physical aperture size, the BS and UE densities, or the path loss exponents. It is also difficult to compare the gains of uplink CUMA with other network designs from a more general and systematic perspective. Therefore, developing a tractable analytical framework is not only a theoretical exercise, but an important step toward establishing uplink CUMA as a practical design.

Among the available analytical tools, stochastic geometry provides a particularly suitable framework for the problem considered in this paper. By modeling BS and UE locations as spatial point processes, it facilitates the analysis of wireless networks with random topology and captures the effect of network-wide interference \cite{SG1}. Since its tractability and effectiveness in characterizing coverage and rate in cellular systems have been well recognized, stochastic geometry has become a key tool for the study of large-scale wireless networks \cite{SG2,SG3}. For the present problem in cellular networks, stochastic geometry is attractive for two reasons. First, the key difficulty in uplink CUMA lies not only in the small-scale fading and port correlation, but also in the random spatial distribution of interferers across the entire network. Secondly, stochastic geometry has been widely employed to analyze uplink cellular networks with spatially distributed users, where it provides a tractable framework to characterize aggregate interference and evaluate key performance metrics such as coverage probability and average rate \cite{SG4,SG5}. Therefore, it serves as a natural and powerful tool for studying uplink CUMA in large-scale networks. Accordingly, this work adopts a stochastic geometry model to analyze uplink CUMA in multi-cell cellular networks and to derive tractable expressions that can reveal the impact of key system parameters. Specifically, we consider an uplink cellular network where BSs and UEs are distributed according to independent homogeneous Poisson point processes (PPPs), and each UE is associated with its nearest BS. Moreover, each BS is equipped with a two-dimensional (2D) FAS consisting of multiple ports over a compact aperture, while each UE employs a single fixed-position antenna. Assuming Rayleigh fading due to rich scattering, we investigate the uplink CUMA system where each BS selects and superposes ports according to the desired user's channel. The main analytical challenge is that the performance depends jointly on the random service distance, the spatially correlated port-domain fading, the randomness of the CUMA port-selection rule, and the aggregate interference from all other active UEs in the network. To overcome this, we separately characterize the desired signal and the interference statistics, and then combine them into a tractable approximation of the signal-to-interference (SIR) distribution. Our main contributions are summarized as follows:
\begin{itemize}
\item First and foremost, we establish an analytical stochastic geometry-based framework for uplink CUMA in multi-cell networks. Under the nearest-BS association rule and correlated Rayleigh fading across fluid antenna ports, we characterize the desired-signal power at the CUMA output and approximate its distribution in a tractable form. In parallel, we derive the statistical characterization of the aggregate interference by exploiting the random structure of the CUMA port-selection rule together with the probability generating functional of PPPs.
\item Based on the obtained distributions of the desired signal and interference, we derive an approximate expression for the SIR coverage probability of uplink CUMA. Building upon this result, we further analyze the average rate, and obtain a useful asymptotic characterization in the high-load regime. These analytical results reveal how the performance depends on the system parameters.
\item Furthermore, we provide simulation results to validate the developed analytical expressions and to demonstrate the effectiveness of uplink CUMA in interference-limited multi-cell scenarios. The results confirm that the proposed analysis is accurate and can capture the main performance trends with respect to key system and propagation parameters. In addition, by comparing with conventional reception schemes, the results highlight the performance advantage of uplink CUMA under practical CSI constraints, and illustrate the potential of uplink CUMA as a low-complexity and low-CSI uplink reception solution for future large-scale cellular networks.
\end{itemize}

The rest of this paper is organized as follows. Section \ref{sec:model} introduces the stochastic geometry-based system model and the uplink CUMA scheme. Section \ref{sec:analysis} presents the performance analysis. Section \ref{sec:result} then provides simulation results to verify the analysis and to illustrate the impact of key parameters. Finally, Section \ref{sec:conclude} concludes the paper. The main symbols used throughout this paper are summarized in Table~\ref{tab:notation}.

\begin{table}[!t]
\renewcommand{\arraystretch}{1.3}
\caption{Notations}\label{tab:notation}
\centering
\begin{tabular}{ll}
		\hline
		\textbf{Notation} & \textbf{Description} \\
		\hline
		$x$, $\mathbf{x}$, $\mathbf{X}$ & Scalar, vector, and matrix \\
		$X_{m,n}$ or $[{\bf X}]_{m,n}$ & The $(m,n)$-th entry of matrix $\mathbf{X}$ \\
		$(\cdot)^{T}$, $(\cdot)^{H}$ & Transpose, Hermitian transpose \\
		$\Phi_{i}$ &  PPP of i\\
		$\mathbf{e}_n$ &  The standard basis vector\\
		$\Re\{\cdot\}$ & Real part \\
		$\mathbb{E}\{\cdot (\mid \cdot)\}$ & (Conditioned) expectation \\
		$\mathrm{var}(\cdot)$, $\mathrm{cov}(\cdot,\cdot)$ & Variance, covariance \\
		$j_{0}(\cdot)$ & Zeroth-order Bessel function of the first kind \\
		$\bigcup_{\cdot}, \cdot\setminus\cdot $ & Union and set difference operator \\
		${}_{2}F_{1}(\cdot)$ & Gaussian hypergeometric function \\
		$\Gamma(\cdot)$ & Gamma function \\
		$\mathcal{L}_{\cdot (\mid \cdot)}(\cdot (\mid \cdot))$ &  (Conditioned) Laplace transform \\
		$\binom{\cdot}{\cdot}$ & Binomial coefficient \\
		$\lceil \cdot \rceil$ & Ceiling operator \\
		$\mathcal{O}(\cdot)$ & The asymptotic order of a function \\
		$\mathcal{(C)N}(\cdot,\cdot)$ & (Complex) Gaussian distribution \\
		$\mathrm{Gamma}\left(\cdot,\cdot\right)$ & Gamma distribution \\
		$\mathcal{X}_1$ & Standard Chi-squared distribution\\
		\hline
\end{tabular}
\end{table}

\vspace{-2mm}
\section{System Model and Uplink CUMA}\label{sec:model}
We first build a system model based on stochastic geometry, and then introduce the uplink CUMA briefly.

\vspace{-2mm}
\subsection{System Model}
Consider an uplink cellular network consisting of multiple BSs and UEs, which is modeled using stochastic geometry. To model the locations of all BSs, we use a PPP \(\Phi_b \subset \mathbb{R}^2\) of density \(\lambda_b\) in \(\mathbb{R}^2\). Specifically, for any bounded region \(\mathcal{A} \subset \mathbb{R}^2\), the number of BSs located in \(\mathcal{A}\), denoted by \(N_b(\mathcal{A})\), follows a Poisson distribution, given by
\begin{equation}
\mathbb{P}\left(N_b(\mathcal{A}) = k\right)= \frac{(\lambda_b |\mathcal{A}|)^k}{k!}\exp\left(-\lambda_b |\mathcal{A}|\right),
\end{equation}
where \(|\mathcal{A}|\) represents the area of \(\mathcal{A}\). Moreover, the numbers of BSs located in disjoint regions are independent. Similarly, the locations of all UEs are modeled as a PPP \(\Phi_u \subset \mathbb{R}^2\) of density \(\lambda_u\). Each UE has a single fixed antenna, while each BS is equipped with a 2D FAS of size \(W_1 \lambda \times W_2 \lambda\), which contains \(N = N_1 \times N_2\) ports, with \(\lambda\) being the carrier wavelength. 

Before communication, each UE associates with its nearest BS, which partitions the plane into Voronoi cells. Without loss of generality, by invoking Slivnyak's theorem \cite{SG1}, we focus on a typical BS located at the origin, denoted by \(\mathrm{BS}_0\). Let \(\mathcal{U}\) and \(\mathcal{U}_0\) denote the sets of all UEs and those associated with \(\mathrm{BS}_0\), respectively. In this paper, we adopt a Rayleigh fading model without line-of-sight (LoS) propagation. Then we define the channel between \(\mathrm{BS}_0\) and \(\mathrm{U}_u \in \mathcal{U}\) as
\begin{equation}
\mathbf{h}_u= \sqrt{\beta_u}\, \widetilde{\mathbf{h}}_u,
\end{equation}
where \(\beta_u = r_u^{-\eta}\) represents the large-scale path loss with scale \(\eta > 2\), \(r_u\) denotes the propagation distance between \(\mathrm{BS}_0\) and \(\mathrm{U}_u\), and \(\widetilde{\mathbf{h}}_u\) represents the small-scale fading, which follows the correlated complex Gaussian distribution:
\begin{equation}
\widetilde{\mathbf{h}}_u = \{\widetilde{h}_u^{(1)},\widetilde{h}_u^{(2)},\dots,\widetilde{h}_u^{(N)}\} \sim \mathcal{CN}\left(\mathbf{0}, \mathbf{J}\right),
\end{equation}
where \(\mathbf{J} \in \mathbb{R}^{N \times N}\) denotes the spatial correlation matrix across all ports. Define an index mapping \((n_1,n_2) \rightarrow k_{n_1,n_2}\):
\begin{equation}
k_{n_1,n_2} = n_2 + (n_1 - 1) N_2,~1\leq n_1\leq N_1, n_2\leq N_2.
\end{equation}
Following \cite{FAS3}, the correlation coefficient between \(h_u^{(k_{n_1,n_2})}\) and \(h_u^{(k_{\widetilde{n}_1,\widetilde{n}_2})}\) is given by
\begin{multline}\label{eq:Jk}
J_{k_{n_1,n_2},k_{\widetilde{n}_1,\widetilde{n}_2}}= \\ 
j_0\left(2\pi\sqrt{\left(\frac{n_1 - \widetilde{n}_1}{N_1-1}W_1\right)^2+\left(\frac{n_2 - \widetilde{n}_2}{N_2-1}W_2\right)^2}\right),
\end{multline}
where \(j_0(\cdot)\) is the zeroth-order Bessel function of the first kind.

During the transmission period, all UEs transmit their messages simultaneously over the same time-frequency resource block in the uplink. In this paper, the number of RF chains is set larger than the average number of UEs per cell, i.e., 
\begin{equation}
\Lambda = \lambda_u/\lambda_b,
\end{equation}
to ensure that each UE can be served. When the RF chains are insufficient, excess UEs are deferred randomly. More efficient scheduling schemes can be adopted in practice. Let \(x_u\) denote the transmitted symbol of \(\mathrm{UE}_u\), which satisfies \(\mathbb{E}\{|x_u|^2\} = 1\). The uplink received signal at \(\mathrm{BS}_0\) can be written as
\begin{equation}\label{eq:y}
\mathbf{y} =\underbrace{\sum_{\mathrm{U}_u\in\mathcal{U}_0} \mathbf{h}_u x_u}_{\text{intra-cell signals}}+\underbrace{\sum_{\mathrm{U}_v\in\mathcal{U}\setminus\mathcal{U}_0}\mathbf{h}_v x_v}_{\text{ICI (inter-cell interference)}}+\underbrace{\mathbf{n}}_{\text{noise}},
\end{equation}
where \(\mathbf{n} \sim \mathcal{CN}(\mathbf{0}, P_N\mathbf{I})\) is the additive Gaussian noise. 

\vspace{-2mm}
\subsection{Uplink CUMA Scheme}
In the considered system model, only the CSI of the UEs associated with the same cell is assumed to be available at each BS, while the ICI is unknown. As a result, conventional uplink receiver schemes such as ZF, MMSE and MRC are ineffective in suppressing ICI. This motivates us to adopt the CUMA scheme, which was proposed in \cite{CUMA1,CUMA_SG}, where the detailed algorithmic development can be found. Here, we outline the steps essential for subsequent performance analysis.

At the BS, the uplink signals of the associated UEs are decoded separately via a process consisting of port selection and signal superposition. Let \(\mathbf{A}\) denote the binary port-selection matrix for \(\mathrm{BS}_0\), which is expressed as
\begin{equation}
\mathbf{A} =\left[\mathbf{a}_1,\mathbf{a}_2,\dots,\mathbf{a}_K\right],
\end{equation}
where \(\mathbf{a}_m\in\{\mathbf{e}_1,\dots,\mathbf{e}_N\}\) with \(\mathbf{e}_n\) being the standard basis vector, and \(K\) denotes the number of activated FAS ports. For \(\mathrm{UE}_u\), let \(\mathbf{b}_u\in \{0,1\}^{K\times 1}\) denote its switching vector. Then the received signal for \(\mathrm{U}_u\) is given by
\begin{equation}
z_u = \mathbf{b}_u^T\mathbf{A}^T\mathbf{y}.
\end{equation}

We adopt the CUMA strategy and focus on maximizing the power of the desired UE signal. For uplink, we can use the same strategy to determine the activated port set \(\mathcal{K}_u\) for each UE, say \(\mathrm{UE}_u\in\mathcal{U}_0\) separately. Then the overall activated set is obtained from the union of all port sets, i.e.,
\begin{equation}
\mathcal{K} = \bigcup_{\mathrm{U}_u\in\mathcal{U}_0} \mathcal{K}_u.
\end{equation}

The approach to obtain \(\mathcal{K}_u\) is presented as follows. First, we partition all ports according to the sign of their channel coefficients' real parts into two sets: 
\begin{equation}
\left\{\begin{aligned}
\mathcal{K}_u^{+}&=\big\{k:\Re\{\widetilde{h}_u^{(k)}\}\ge0\big\},\\
\mathcal{K}_u^{-}&=\big\{k:\Re\{\widetilde{h}_u^{(k)}\}<0\big\}.
\end{aligned}\right.
\end{equation}
After grouping, signal superposition of the ports within the same group leads to a constructive combination on the complex plane, thereby increasing the magnitude of their sum. Thus, by selecting one group and activating all its ports, each port contributes positively to the desired signal power. Then we can choose the set for providing a larger amplification of signal power, i.e., the selected port set is given by
\begin{equation}
\mathcal{K}_u=\begin{cases}
\mathcal{K}_u^{+}, &\text{if }\sum_{k\in\mathcal{K}_u^{+}}\Re\{\widetilde{h}_u^{(k)}\}
\ge\sum_{k\in\mathcal{K}_u^{-}}\big|\Re\{\widetilde{h}_u^{(k)}\}\big|,\\
\mathcal{K}_u^{-}, &\text{otherwise.}
\end{cases}
\end{equation} 

For conciseness, we define \(\mathbf{w}_u\) to combine the combining vector \(\mathbf{b}_u\) and the port selection matrix \(\mathbf{A}\), i.e., \(\mathbf{w}_u = \mathbf{A}\mathbf{b}_u = [\mathbf{w}_u^{(1)},\mathbf{w}_u^{(2)},\dots,\mathbf{w}_u^{(N)}]^T\). With the activated port set, \(\mathcal{K}_u\), for \(\mathrm{UE}_u\), the element of \(\mathbf{w}_u\) is set as
\begin{equation}
w_u^{(n)}=\begin{cases}
1, & n\in\mathcal{K}_u,\\
0, & \text{otherwise}.
\end{cases}
\end{equation}
The received signal for \(\mathrm{UE}_u\) is therefore given by
\begin{equation}\label{eq:zu}
\begin{aligned}
z_u &=\underbrace{\mathbf{w}_u^T\mathbf{h}_ux_u}_{\text{desired signal}}\\
&+\underbrace{\mathbf{w}_u^T\sum_{\mathrm{U}_w\in\mathcal{U}_0\setminus\{U_u\}} \mathbf{h}_w x_w}_{\text{intra-cell interference}}+
\underbrace{\mathbf{w}_u^T\sum_{\mathrm{U}_v\in\mathcal{U}\setminus\mathcal{U}_0}\mathbf{h}_v x_v}_{\text{ICI}}
+\underbrace{\mathbf{w}_u^T\mathbf{n}}_{\text{noise}}.
\end{aligned}
\end{equation}
The received SINR of \(x_u\) can be expressed as
\begin{equation}\label{eq:gamma_cu}
	\mathrm{SINR}_u
	=
	\frac{|\mathbf{w}_u^T\mathbf{h}_u|^2}{
		\displaystyle
		\underbrace{
			\sum_{\mathrm{U}_w\in\mathcal{U}_0\setminus\{U_u\}}
			|\mathbf{w}_u^T\mathbf{h}_w|^2
		}_{\text{intra-cell interference}}
		+
		\underbrace{
			\sum_{\mathrm{U}_v\in\mathcal{U}\setminus\mathcal{U}_0} |\mathbf{w}_u^T\mathbf{h}_v|^2
		}_{\text{ICI}}
		+
		\underbrace{
			P_N
		}_{\text{noise}}
	}.
\end{equation}

During the port-selection process, the overall interference is treated by the random selection and superposition of ports. Consequently, as the number of selected ports, \(K\), increases, the gain of the interference becomes substantially lower than the desired UE's signal gain, thereby indirectly improving the SINR. This strategy is of low complexity and does not require any CSI from outside the Voronoi cell. In addition, this port-selection scheme does not require any phase control or active device. To evaluate this scheme, as well as discuss the impact of system and environment parameters, the performance analysis will be presented in the next section.

\vspace{-2mm}
\section{Performance Analysis}\label{sec:analysis}
In this section, we characterize the analytical performance of the uplink CUMA scheme. By Slivnyak's theorem, we focus on a typical UE, \(\mathrm{UE}_0\), associated with \(\mathrm{BS}_0\). In the considered multi-cell uplink scenario, interference dominates the system performance, and is thus assumed to be much stronger than noise. As a consequence, we first analyze the SIR, which can be written in a compact ratio form
\begin{equation}
\gamma = \frac{S}{I},
\end{equation}
where \(S\) represents the desired-signal power at the CUMA output, and \(I\) denotes the interference power. For analytical tractability, we assume that the desired signal \(S\) and the interference \(I\) are independent. Although this assumption is not strictly valid due to their inherent correlation, it is commonly adopted in the analysis for FAMA, and has been shown to yield accurate approximations in similar settings. In the rest of this section, we first characterize the distributions of \(S\) and \(I\), then derive the distribution of \(\gamma\), and finally analyze the system performance based on the obtained distribution.

\vspace{-2mm}
\subsection{Distribution Characteristics of \(S\)}
The expressions of \(S\) and \(I\) depend on the selected port set \(\mathcal{K}_0\). However, under the correlated Rayleigh fading assumption and if \(N\) is sufficiently large, the two candidate sets \(\mathcal{K}_0^+\) and \(\mathcal{K}_0^-\) tend to contain an equal number of ports, and are selected with equal probability. Therefore, without loss of generality, we can analyze the overall performance by considering only one of them. In the following, we adopt \(\mathcal{K}_0^+\), and the desired signal power \(S\) can be expressed as
\begin{equation}\label{eq:X_def}
S = \left|\sum_{n=1}^{N} X_n^{+}\right|^2,
\end{equation}
where
\begin{subequations}\label{eq:Xn_def}
\begin{align}
X_n^{+} &= \max\{0,X_n\},\\
\mathbf{X} &= [X_1,X_2,\dots,X_N]^T \sim \mathcal{N}\left(\mathbf{0},\frac{\ell(r_0)}{2}\mathbf{J}\right).
\end{align}
\end{subequations}
Here \(\ell(r)=r^{-\eta}\) denotes the large-scale path loss function for the transmission distance \(r\). The large-scale attenuation changes only the scale of \(X\). As a result, according to \cite{CUMA1}, when \(N\to +\infty\), the conditioned random variable \(\sqrt{S} | r_0\) is approximately Gaussian with mean
\begin{equation}\label{eq:mu1}
\mu_1(r_0) = \sqrt{\ell(r_0)}\bar{\mu}_1 =  N\sqrt{\frac{\ell(r_0)}{2\pi}},
\end{equation}
and variance
\begin{equation}\label{eq:var1}
\begin{aligned}
\sigma_1^2(r_0)& = \bar{\sigma}_1^2\ell(r_0)\\ 
&=\frac{\ell(r_0)N}{2}\left(1-\frac{1}{\pi}\right)+ 2\ell(r_0)\sum_{m=2}^{N}\sum_{k=1}^{m-1}{\rm cov}(X_k^+,X_m^+),
\end{aligned}
\end{equation}
in which
\begin{multline}\label{eq:covX+}
{\mathrm{cov}}(X_k^+,X_m^+) =\frac{\left( 1 - \rho_{k,m}^2 \right)^{\frac{3}{2}} }{2\pi}- \frac{1}{4\pi}\\
+ \frac{\rho_{k,m}}{2 \sqrt{2\pi}}\mathcal{W} \left(- \sqrt{\frac{2\rho_{k,m}^2}{1 - \rho_{k,m}^2}},\frac{1}{2}, \frac{1}{2}\right),
\end{multline}
where \(\rho_{k,m} = [\mathbf{J}]_{k,m}\) is the correlation coefficient, and
\begin{multline}
\mathcal{W}(a, b, c) =- \frac{a \, \Gamma\left( \frac{2c + 3}{2} \right)}{\sqrt{2\pi b} \; b^{\frac{2c + 3}{2}}}\, {}_2F_1 \left(\frac{1}{2}, \frac{2c + 3}{2}; \frac{3}{2} ; -\frac{a^2}{2b}\right)\\
+\frac{\Gamma(c + 1)}{2 b^{c + 1}}.
\end{multline}

Conditioned on \(r_0\), we approximate \(S\) by a Gamma distribution via moment matching:
\begin{equation}\label{eq:Gamma_MM}
S\mid r_0\approx\mathrm{Gamma}\left(m(r_0),\,\theta_S(r_0)\right),
\end{equation}
where
\begin{equation}\label{eq:Gamma_MM_params}
\left\{\begin{aligned}
m(r_0)&=\frac{\mathbb{E}\{S\mid r_0\}^2}{\mathrm{Var}(S\mid r_0)},\\
\theta_S(r_0)&=\frac{\mathrm{Var}(S\mid r_0)}{\mathbb{E}\{S\mid r_0\}}.
\end{aligned}\right.
\end{equation}
The conditioned mean and variance of \(X\) can be expressed in terms of \(\mu_1(r_0)\) and \(\sigma_1(r_0)\) such that
\begin{align}
\mathbb{E}\{S\mid r_0\} &= \mu_1^2(r_0)+\sigma_1^2(r_0) = (\bar{\mu}_1^2+\bar{\sigma}_1^2)\ell(r_0),\\
\mathrm{Var}(S\mid r_0) &= 2\sigma_1^4(r_0)+4\mu_1^2(r_0)\sigma_1^2(r_0).
\end{align}
From the definitions of \(\mu_1\) and \(\sigma_1^2\), given in \eqref{eq:mu1} and \eqref{eq:var1}, it can be known that \(\mu_1 \sim \sqrt{\ell(r_0)}\) and \(\sigma_1^2 \sim \ell(r_0)\). As a result, \(\theta_S(r_0) \sim \ell(r_0)\), and \(m(r_0)\) is independent from \(r_0\). Thus, we can use \(m\) to denote \(m(r_0)\) in the following analysis.

\vspace{-2mm}
\subsection{Distribution Characteristics of \(I\)}
The interference term \(I\) consists of both intra-cell and inter-cell interference. However, since the port selection for \(\mathrm{UE}_0\) depends only on its own CSI and is independent of all other UEs, all interfering UEs affect the receiver only through their effective power. Hence, there is no need to distinguish between intra-cell and inter-cell interferers in the subsequent analysis. Consider \(\mathrm{UE}_v\) with distance \(r_v\) to \(\mathrm{BS}_0\). The effective interference power contributed by \(\mathrm{UE}_v\) can be modeled as
\begin{equation}\label{eq:Vx_def}
V_v \triangleq \left|\sum_{n=1}^{N} t_n\,Y_n(v)\right|^2 = |\mathbf{t}^T\mathbf{Y}(v)|^2,
\end{equation}
where \(\mathbf{Y}(v) = [Y_1(v),\dots,Y_{N}(v)]^T\sim\mathcal{N}\left(\mathbf{0},\frac{\ell(r_v)}{2}\mathbf{J}\right)\), and \(\{t_n = 1\{X_n>0\}\}\) is the port selection indicator coefficient. 

\begin{lemma}\label{le:t}
The port selection indicator \(\mathbf{t}\) can be regarded as a correlated Bernoulli vector with parameters
\begin{align}
\mathbb{E}\{t_n\} &= \frac{1}{2},\\
\mathrm{cov}\{t_k,t_m\} &= \frac{1}{2\pi}\arcsin(\rho_{k,m}).
\end{align}
\end{lemma}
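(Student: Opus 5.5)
The plan is to use only the Gaussian law of $\mathbf{X}$ given in \eqref{eq:Xn_def} and reduce both moments to standard orthant probabilities. By definition, $t_n = 1\{X_n>0\}$, so each $t_n$ is $\{0,1\}$-valued; the vector $\mathbf{t}$ is therefore a (generally correlated) Bernoulli vector. Since $\mathbf{X}$ is determined by $\widetilde{\mathbf{h}}_0$ alone, $\mathbf{t}$ is also independent of the interferers' channels $\mathbf{Y}(v)$. Moreover, the event $\{X_n>0\}$ is invariant to positive scaling of $X_n$. Hence the path-loss factor $\ell(r_0)/2$ plays no role, and I may assume without loss of generality that $\mathbf{X}\sim\mathcal{N}(\mathbf{0},\mathbf{J})$ with unit diagonal, since $J_{n,n}=j_0(0)=1$ by \eqref{eq:Jk}.

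For the mean, $X_n$ is a zero-mean Gaussian with a symmetric, continuous law. Therefore $\mathbb{E}\{t_n\}=\mathbb{P}(X_n>0)=1/2$, and the null event $X_n=0$ does not matter.

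For the covariance, I write $\mathrm{cov}\{t_k,t_m\}=\mathbb{P}(X_k>0,X_m>0)-\tfrac14$. Here $(X_k,X_m)$ is a standard bivariate normal with correlation $\rho_{k,m}=[\mathbf{J}]_{k,m}$, taken from the $2\times2$ principal submatrix of $\mathbf{J}$. The key step is Sheppard's quadrant formula
\begin{equation*}
\mathbb{P}(X_k>0,X_m>0)=\frac14+\frac{1}{2\pi}\arcsin(\rho_{k,m}).
\end{equation*}
I will prove it by a rotation argument. First represent $X_k=U$ and $X_m=\rho U+\sqrt{1-\rho^2}\,V$, with $U,V$ i.i.d. $\mathcal{N}(0,1)$. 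The vector $(U,V)$ is rotationally invariant, so its polar angle $\Theta$ is uniform on $[0,2\pi)$. The event $\{U>0\}$ is a half-plane, and $\{\rho U+\sqrt{1-\rho^2}V>0\}$ is another half-plane whose boundary normal makes angle $\phi=\arccos\rho$ with the first. Two half-planes through the origin whose normals are separated by $\phi$ intersect in a wedge of opening $\pi-\phi$. Hence the probability is $(\pi-\arccos\rho)/(2\pi)=\tfrac14+\tfrac{1}{2\pi}\arcsin\rho$, using $\arccos\rho=\tfrac{\pi}{2}-\arcsin\rho$. Subtracting $\tfrac14$ then gives the stated covariance.

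The main point needing care is not the calculation but the modeling statement that $\mathbf{t}$ is determined as in \eqref{eq:Vx_def}. This relies on the convention of the previous subsection that $\mathcal{K}_0^+$ is selected, so that $t_n=1\{X_n>0\}$ exactly. If instead $\mathcal{K}_0^-$ is chosen, the indicator becomes $1\{X_n<0\}=1-t_n$ almost surely. This leaves both the mean $1/2$ and the pairwise covariances unchanged, because the map $t\mapsto 1-t$ preserves covariances. I will note that the data-dependent choice between $\mathcal{K}_0^{+}$ and $\mathcal{K}_0^{-}$ is set aside here, consistent with the large-$N$ symmetry argument used for $S$. The lemma is then exactly a statement about the first and second moments of the sign pattern of a correlated Gaussian vector. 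The degenerate cases $\rho_{k,m}=\pm1$ also follow directly, since the indicators then coincide or are complementary, and they are covered by the same formula.
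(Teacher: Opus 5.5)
Your proposal is correct and follows essentially the same route as the paper's proof. Both represent $X_k=U$, $X_m=\rho_{k,m}U+\sqrt{1-\rho_{k,m}^2}\,V$ with i.i.d.\ standard normals and use the uniform polar angle of $(U,V)$ to obtain $\mathbb{P}(X_k>0,X_m>0)=\tfrac14+\tfrac{1}{2\pi}\arcsin(\rho_{k,m})$; the only difference is that the paper places the wedge boundary at $-\arctan(\alpha_{k,m})$, while you use the angle $\arccos\rho_{k,m}$ between the half-plane normals. Your version also correctly writes $\sqrt{1-\rho_{k,m}^2}$ where the paper's appendix has the typo $\sqrt{1-\rho_{k,m}}$, and your remarks on scale invariance and on the $\mathcal{K}_0^-$ case are sound additions.
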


\begin{proof}
See Appendix \ref{app:t}
\end{proof}

Similar to the signal power, the large-scale attenuation \(\sqrt{\ell(r_v)}\) is a common factor across all ports of \(\mathbf{h}_v\). As a result, conditioned on \(r_v\), when \(N\to +\infty\), \(\sqrt{V_v} | r_v\) is approximately zero-mean Gaussian with variance \(\sigma_2^2(r_v) = \ell(r_v)\bar{\sigma}_2^2\), with
\begin{equation}\label{eq:var2}
\begin{aligned}
\bar{\sigma}_2^2&=\sum_{k=1}^{N}\sum_{m=1}^{N}\mathbb{E}\{t_kt_m\}\frac{1}{2}\rho_{k,m}\\
&=\frac{1}{4}N+\frac{1}{4}\sum_{m=2}^{N}\sum_{k=1}^{m-1}\rho_{k,m}+\frac{1}{2\pi}\sum_{m=2}^{N}\sum_{k=1}^{m-1}\rho_{k,m}\arcsin(\rho_{k,m}),
\end{aligned}
\end{equation}
and therefore \(V_v \sim \sigma_2^2(r_v)\mathcal{X}_1\) is a scaled Chi-squared random variable. To analyze the distribution characteristic of \(I\), we can first write the conditioned Laplace transform of \(V_v\):
\begin{equation}\label{eq:LVcv}
\mathcal{L}_{V_v \mid r_v}(s \mid r) = \frac{1}{\sqrt{1+2\sigma_2^2(r)s}}.
\end{equation}
The aggregate interference can be expressed as
\begin{equation}
I =\sum_{\mathrm{U}_v\in\mathcal{U}\setminus\{\mathrm{U}_0\}}V_v.
\end{equation}
To proceed, we adopt an equivalent spatial representation. Let \(x_v \in \mathbb{R}^2\) represent the location of \(\mathrm{U}_v\). Then the aggregate interference can be equivalently expressed as
\begin{equation}
I =\sum_{x_v \in \Phi_u \setminus \{x_0\}} V(x_v),
\end{equation}
where \(\Phi_u\) is the PPP of UE locations, and \(V(x_v)\) denotes the interference contribution from the UE located at \(x_v\).

Conditioned on the spatial realization \(\Phi_u\), the small-scale fading coefficients associated with different UEs are independent. Therefore, the interference contributions \(\{V(x_v)\}\) are independent conditioned on \(\Phi_u\), and thus we have
\begin{align}
\mathcal{L}_I(s \mid \Phi_u) &= \mathbb{E}\left\{\exp(-sI)\mid \Phi_u\right\}\notag\\
&=\mathbb{E}\left\{\prod_{x_v\in\Phi_u\setminus\{x_0\}}\exp\!\left(-sV(x_v)\right)\middle|\Phi_u\right\}\notag\\
&=\prod_{x_v\in\Phi_u\setminus\{x_0\}}\mathbb{E}\left\{\exp\!\left(-sV(x_v)\right)\middle|\Phi_u\right\}\notag\\
&=\prod_{x_v\in\Phi_u\setminus\{x_0\}}\mathbb{E}\left\{\exp\!\left(-sV(x_v)\right)\middle| r_v\right\}\notag\\
&=\prod_{x_v\in\Phi_u\setminus\{x_0\}}\mathcal{L}_{V_v\mid r_v}(s\mid r_v).
\end{align}
The third step follows from the conditional independence of \(\{V(x_v)\}\) given \(\Phi_u\), and the fourth step holds because the distribution of \(V(x_v)\) depends on the UE location only through \(r_v\). Taking the expectation with respect to \(\Phi_u\) and applying the probability generating functional of the PPP \cite{SG2}, the Laplace transform of \(I\) can then be obtained as
\begin{align}
\mathcal{L}_I(s)&=\mathbb{E}_{\Phi_u}\left[\prod_{x \in \Phi_u \setminus \{x_0\}}\mathcal{L}_{V \mid r}(s \mid r)\right]\notag\\
&=\exp\!\left(-\lambda_u\int_{\mathbb{R}^2}\left(1 - \mathcal{L}_{V \mid r}(s \mid r)\right)\mathrm{d}x\right).
\end{align}
Substituting the conditional Laplace transform \eqref{eq:LVcv} and converting the spatial integral into polar coordinates, we obtain
\begin{equation}\label{eq:LY_final}
\mathcal{L}_I(s)=\exp\left(-2\pi \lambda_u\int_{0}^{\infty}\left(1 -\frac{1}{\sqrt{1 + 2 \bar{\sigma}_2^2sr^{-\eta}}}\right)r\,\mathrm{d}r\right).
\end{equation}
To simplify the expression, we introduce the lemma below.

\begin{lemma}\label{le:IA}
When \(\eta>2\), the integral
\begin{equation}\label{eq:I_def}
I(A)=\int_{0}^{+\infty}\left(1-(1+A r^{-\eta})^{-1/2}\right) r\,\mathrm{d}r,
\end{equation}
can be derived as
\begin{equation}\label{eq:I_closed}
I(A) = \frac{A^{2/\eta}}{2}\,\frac{\Gamma\left(1-\frac{2}{\eta}\right)\Gamma\left(\frac12+\frac{2}{\eta}\right)}{\sqrt{\pi}}.
\end{equation}
\end{lemma}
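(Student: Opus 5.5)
We plan to prove the closed form in Lemma~\ref{le:IA} by reducing the radial integral to a Beta-type integral. The first step is the substitution $u = A r^{-\eta}$, equivalently $r = (A/u)^{1/\eta}$. It maps $r\in(0,\infty)$ onto $u\in(\infty,0)$ and gives $r\,\mathrm{d}r = -\frac{1}{\eta}A^{2/\eta}u^{-2/\eta-1}\,\mathrm{d}u$. The integral then becomes
\begin{equation*}
I(A)=\frac{A^{2/\eta}}{\eta}\int_0^{\infty}\left(1-(1+u)^{-1/2}\right)u^{-2/\eta-1}\,\mathrm{d}u .
\end{equation*}
This separates out the scaling $A^{2/\eta}$ and leaves a pure constant depending only on $\eta$.

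Before computing anything, we will check convergence. Near $u=0$ the integrand behaves like $\frac{u}{2}u^{-2/\eta-1}=\frac12u^{-2/\eta}$, which is integrable because $2/\eta<1$. As $u\to\infty$ it behaves like $u^{-2/\eta-1}$, which is integrable because $2/\eta>0$. This is exactly where the hypothesis $\eta>2$ enters.

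The main obstacle is that $1-(1+u)^{-1/2}$ is not itself a Beta kernel, so splitting it into two integrals would give two separately divergent pieces. We therefore integrate by parts with $\mathrm{d}v = u^{-2/\eta-1}\mathrm{d}u$, so that $v=-\frac{\eta}{2}u^{-2/\eta}$. The boundary term $-\frac{\eta}{2}u^{-2/\eta}\bigl(1-(1+u)^{-1/2}\bigr)$ vanishes at both ends: near $0$ it is of order $u^{1-2/\eta}$, and at infinity it is of order $u^{-2/\eta}$. The derivative of $1-(1+u)^{-1/2}$ is $\frac12(1+u)^{-3/2}$, which yields
\begin{equation*}
I(A)=\frac{A^{2/\eta}}{\eta}\cdot\frac{\eta}{2}\cdot\frac12\int_0^\infty u^{-2/\eta}(1+u)^{-3/2}\,\mathrm{d}u .
\end{equation*}
The remaining integral is the Beta function $B(a,b)$ with $a=1-2/\eta$ and $a+b=3/2$, so $b=\frac12+\frac2\eta$. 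It equals $\Gamma(1-\frac2\eta)\Gamma(\frac12+\frac2\eta)/\Gamma(\frac32)$.

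Finally, we substitute $\Gamma(3/2)=\sqrt{\pi}/2$. The factor $\frac12$ in front combines with the $2/\sqrt\pi$ coming from $1/\Gamma(3/2)$ to give $1/\sqrt\pi$. Together with the remaining $\frac12 A^{2/\eta}$, this yields exactly \eqref{eq:I_closed}.

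As a sanity check, we would take the limit $\eta\to\infty$ with $A$ fixed. In that limit the integrand converges to $r/2$ on $r>1$ times a shrinking region, so the check is consistent with the blow-up of $\Gamma(1-2/\eta)$ as $\eta\downarrow2$ rather than with the $\eta\to\infty$ behaviour. We would also verify the result numerically at $\eta=4$.
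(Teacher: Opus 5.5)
Your proof is correct, and it takes a genuinely different, more elementary route than the paper. You substitute $u=Ar^{-\eta}$ to factor out $A^{2/\eta}$. You then integrate by parts, trading $1-(1+u)^{-1/2}$ for its derivative $\tfrac12(1+u)^{-3/2}$. This leads directly to the complete Beta integral $B\bigl(1-\tfrac{2}{\eta},\tfrac12+\tfrac{2}{\eta}\bigr)$. The role of $\eta>2$ is transparent: it gives integrability at $u=0$ and makes the boundary term vanish.

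The paper handles the two divergent pieces differently. It truncates to $[\Omega_L,\Omega_H]$ and splits the integral into $I_1-I_2$. It writes $I_2$ through the incomplete-Beta antiderivative $\frac{u^{\mu}}{\mu}\,{}_2F_1(\nu,\mu;1+\mu;-\beta u)$ with $\mu=-2/\eta$. It discards the $\Omega_H$ term via the Gauss series. It then extracts the constant as $\Omega_L\to0$ from the large-argument connection formula of ${}_2F_1$.

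The paper's route gives, as a by-product, a closed form for the annular integral $I(A,\Omega_L,\Omega_H)$, which would be useful for truncated networks or exclusion zones. Yours is shorter, needs no special-function identities, and is valid uniformly for all $\eta>2$. By contrast, the two-term connection formula requires $b-a\notin\mathbb{Z}$. It therefore degenerates at $\eta=4$ (where $a=\tfrac12$, $b=-\tfrac12$), and the paper's route needs a separate continuity argument there.

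One small correction to your closing sanity check. As $\eta\to\infty$ with $A$ fixed, the integrand tends to $r\,\mathbf{1}\{r<1\}$, so $I(A)\to\tfrac12$ by dominated convergence. This agrees with the formula, since $A^{2/\eta}\to1$ and $\Gamma(1)\Gamma(\tfrac12)/\sqrt{\pi}=1$.
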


\begin{proof}
See Appendix \ref{app:IA}.
\end{proof}

Subsequently, by substituting \eqref{eq:I_closed} into \eqref{eq:LY_final}, we can obtain the Laplace transform of \(I\) as
\begin{equation}\label{eq:LY_closed}
\mathcal{L}_{I}(s)=\exp\left(-4^{\frac{1}{\eta}}\sqrt{\pi}\Gamma\left(1-\frac{2}{\eta}\right)\Gamma\left(\frac12+\frac{2}{\eta}\right)\lambda_u (\bar{\sigma}_2^2s)^{\frac{2}{\eta}}\right).
\end{equation}

\vspace{-2mm}
\subsection{Distribution Characteristics of \(\gamma\)}
Next, by the independence assumption between \(S\) and \(I\), the conditioned probability \(\mathbb{P}(\gamma>\tau\mid r_0)\) can be derived as
\begin{align}
\mathbb{P}(\gamma>\tau\mid r_0) &= \mathbb{P}(S>\tau I\mid I,r_0)\notag\\
&= \mathbb{E}\{1-F_S(\tau I\mid r_0)\}\notag\\
&\approx \mathbb{E}_I\left\{\frac{\Gamma\left(m,\frac{\tau I}{\theta_S(r_0)}\right)}{\Gamma(m)}\right\},
\end{align}
where \(F_S(\tau I\mid r_0)\) is the cumulative density function (CDF) of \(S\), which has been approximated by Gamma distribution. Then by following Alzer's inequality \cite{Alzer}, a tight lower bound of \(\mathbb{P}(\gamma>\tau\mid r_0)\) is given by
\begin{equation}
\frac{\Gamma\left(m,\frac{\tau Y}{\theta_S(r_0)}\right)}{\Gamma(m)} \gtrapprox 1 - \left(1-\exp\left(-a_m\frac{\tau Y}{m\theta_S(r_0)}\right)\right)^{m},
\end{equation}
where \(a_m = m\Gamma(1+m)^{-\frac{1}{m}}\). To avoid the infinite operations, we take an asymptotic integer \(m_0=\lceil m\rceil\) as the power. Then the conditioned coverage probability can be obtained using the binomial theorem, yielding
\begin{align}\label{eq:cov_cond}
\mathbb{P}(\gamma&>\tau\mid r_0)\notag\\
&\approx\sum_{k=1}^{m_0}(-1)^{k+1}\binom{m_0}{k}\mathbb{E}_I\left\{\exp\left(-ka_m\frac{\tau Y}{m\theta_S(r_0)}\right)\right\}\notag\\
&\approx\sum_{k=1}^{m_0}(-1)^{k+1}\binom{m_0}{k}\mathcal{L}_{I}\left(s_k(r_0,\tau)\right),
\end{align} 
where 
\begin{equation}\label{eq:sk_def}
s_k(r_0,\tau)=\frac{ka_m\tau}{m\theta_S(r_0)}.
\end{equation}

Next, we take the expectation over the distance \(r_0\):
\begin{equation}\label{eq:cov_uncond}
\mathbb{P}(\gamma>\tau)\approx\int_{0}^{\infty}\mathbb{P}(\gamma>\tau\mid r_0=r)f_{r_0}(r)\mathrm{d}r,
\end{equation}
where \(f_{r_0}(r)\) denotes the distribution of \(r_0\). Under the nearest-BS association rule, the serving distance \(r_0\) corresponds to the distance from a typical UE to its nearest point in \(\Phi_b\), whose probability density function (PDF) is given by
\begin{equation}\label{eq:fr0}
f_{r_0}(r) = 2\pi \lambda_b r \exp\!\left(-\pi \lambda_b r^2\right), \ r \geq 0.
\end{equation}

By substituting \eqref{eq:LY_final}, \eqref{eq:cov_cond}, \eqref{eq:sk_def} and \eqref{eq:fr0} into \eqref{eq:cov_uncond}, the coverage probability can be expressed as
\begin{equation}\label{eq:cov_uncond2}
\mathbb{P}(\gamma>\tau)\approx2\pi\lambda_b\sum_{k=1}^{m_0}(-1)^{k+1}\binom{m_0}{k}Z_k,
\end{equation}
where 
\begin{multline}
Z_k = \int_{0}^{+\infty}r\times\\
\exp\left(-\pi\left[\begin{array}{l}
\lambda_br^2+\\
\frac{4^{\frac{1}{\eta}}\lambda_u}{\sqrt{\pi}}\Gamma\left(1-\frac{2}{\eta}\right)\Gamma\left(\frac12+\frac{2}{\eta}\right)(\bar{\sigma}_2^2s_k)^{\frac{2}{\eta}}
\end{array}\right]\right)dr.
\end{multline}

\begin{lemma}\label{le:Zk}
The expression of \(Z_k\) is given by
\begin{equation}\label{eq:Zk}
Z_k = \frac{1}{2\pi G_k(\tau)},
\end{equation}
where
\begin{equation}\label{eq:Gk}
G_k(\tau) = \lambda_b+\frac{4^{\frac{1}{\eta}}\lambda_u}{\sqrt{\pi}}\Gamma\left(1-\frac{2}{\eta}\right)\Gamma\left(\frac12+\frac{2}{\eta}\right)
(\bar{\sigma}_2^2\bar{s}_k(\tau))^{\frac{2}{\eta}},
\end{equation}
and
\begin{equation}\label{eq:bsk}
\bar{s}_k(\tau) = \frac{ka_m\tau}{m\theta_S(1)}.
\end{equation}
\end{lemma}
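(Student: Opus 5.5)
The plan is to make the $r$-dependence of the exponent in $Z_k$ explicit, show that it is exactly quadratic in $r$, and then evaluate a Gaussian-type integral in closed form.

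\textbf{Step 1: extract the $r$-dependence of $s_k$.} From \eqref{eq:Gamma_MM_params} and the discussion after it, $m$ does not depend on $r_0$. Moreover, $\theta_S(r_0)=\mathrm{Var}(S\mid r_0)/\mathbb{E}\{S\mid r_0\}$ scales linearly in $\ell(r_0)$. This follows because $\mathrm{Var}(S\mid r_0)=(2\bar\sigma_1^4+4\bar\mu_1^2\bar\sigma_1^2)\ell(r_0)^2$ and $\mathbb{E}\{S\mid r_0\}=(\bar\mu_1^2+\bar\sigma_1^2)\ell(r_0)$. Since $\ell(1)=1$, we can write $\theta_S(r_0)=\theta_S(1)\,r_0^{-\eta}$. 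Substituting into \eqref{eq:sk_def} gives $s_k(r,\tau)=\bar{s}_k(\tau)\,r^{\eta}$, with $\bar s_k$ as in \eqref{eq:bsk}. The coefficient $a_m$ and the integer $m_0$ depend only on $m$, so they are also independent of $r$.

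\textbf{Step 2: make the exponent quadratic.} In the interference term we get $(\bar\sigma_2^2 s_k)^{2/\eta}=(\bar\sigma_2^2\bar s_k(\tau))^{2/\eta}\,r^{2}$. Hence the bracket in the exponent of $Z_k$ becomes $r^2\big[\lambda_b+\tfrac{4^{1/\eta}\lambda_u}{\sqrt{\pi}}\Gamma(1-\tfrac2\eta)\Gamma(\tfrac12+\tfrac2\eta)(\bar\sigma_2^2\bar s_k)^{2/\eta}\big]$, which is exactly $r^2 G_k(\tau)$ with $G_k$ given by \eqref{eq:Gk}. This uses the closed form \eqref{eq:LY_closed} from Lemma~\ref{le:IA}, rewritten with the factor $\pi$ pulled out as in the definition of $Z_k$.

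\textbf{Step 3: integrate.} We now have $Z_k=\int_0^\infty r\,e^{-\pi G_k(\tau) r^2}\,\mathrm{d}r$. Substituting $u=r^2$ turns this into $\tfrac12\int_0^\infty e^{-\pi G_k u}\,\mathrm{d}u=\frac{1}{2\pi G_k(\tau)}$. The integral converges because $G_k(\tau)\ge\lambda_b>0$; here $\eta>2$ ensures the Gamma factors are finite and positive.

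\textbf{Main obstacle.} The only substantive point is Step 1: we must check that the scaling of $\theta_S$ is exactly $\ell(r_0)$, not merely asymptotically so. This holds because both $\mu_1^2$ and $\sigma_1^2$ are exact multiples of $\ell(r_0)$ by \eqref{eq:mu1} and \eqref{eq:var1}. Once $\theta_S(r_0)=\theta_S(1)\ell(r_0)$ is established, the exponent $2/\eta$ cancels the $r^\eta$ factor and the rest is routine.
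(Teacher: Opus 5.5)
Your proposal is correct and follows essentially the same route as the paper: you use $\theta_S(r)\propto\ell(r)$ to write $s_k(r,\tau)=\bar{s}_k(\tau)r^{\eta}$, observe that the exponent collapses to $-\pi r^2 G_k(\tau)$, and evaluate the resulting Gaussian-type integral. Your explicit check that $\theta_S(r_0)=\theta_S(1)\ell(r_0)$ holds exactly, rather than only in the paper's informal ``$\sim$'' sense, tightens the argument without changing it.
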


\begin{proof}
See Appendix \ref{app:Zk}
\end{proof}

Based on these results, a closed-form approximation for the coverage probability can be found as
\begin{equation}\label{eq:cov}
\mathbb{P}(\gamma>\tau)\approx\lambda_b\sum_{k=1}^{m_0}(-1)^{k+1}\binom{m_0}{k}\frac{1}{G_k(\tau)},
\end{equation}
and therefore, the CDF \(F_{\gamma}(\tau)\) is approximated by
\begin{equation}\label{eq:CDF}
F_{\gamma}(\tau) \approx 1 - \lambda_b\sum_{k=1}^{m_0}(-1)^{k+1}\binom{m_0}{k}\frac{1}{G_k(\tau)}.
\end{equation}

\vspace{-2mm}
\subsection{Average rate}
The average rate is a commonly-used performance evaluation metric, which is defined as
\begin{equation}
\bar{R} = \int_{0}^{+\infty}\log(1+\tau)dF_{\gamma}(\tau).
\end{equation}
Given the expression of \(F_{\gamma}(\tau)\), \(\bar{R}\) can be derived as
\begin{equation}
\begin{aligned}
\bar{R} &=\log(1+\tau)F_{\gamma}(\tau)\big|_{\tau=0}^{\tau=\infty}-\int_{0}^{+\infty}F_{\gamma}(\tau)d\log(1+\tau)\\
&= \log(1+\tau)F_{\gamma}(\tau)\big|_{\tau=0}^{\tau=\infty} - \int_{0}^{+\infty}\frac{1}{1+\tau}d\tau\\
&\quad+\lambda_b\sum_{k=1}^{m_0}(-1)^{k+1}\binom{m_0}{k}\int_{0}^{+\infty}\frac{1}{1+\tau}\frac{1}{G_k(\tau)}d\tau\\
&=\lim\limits_{\tau\to\infty}\left[\log(1+\tau)F_{\gamma}(\tau) - \log(1+\tau)F_{\gamma}(\tau)\right]\\
&\quad+\lambda_b\sum_{k=1}^{m_0}(-1)^{k+1}\binom{m_0}{k}\int_{0}^{+\infty}\frac{1}{1+\tau}\frac{1}{G_k(\tau)}d\tau.
\end{aligned}
\end{equation}

\begin{lemma}\label{le:lim}
When \(\eta \geq 2\), we have
\begin{equation}
\lim\limits_{\tau\to\infty}\left[\log(1+\tau)F_{\gamma}(\tau) - \log(1+\tau)\right] = 0.
\end{equation}
\end{lemma}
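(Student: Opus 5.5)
We need to show that $\log(1+\tau)\,(F_\gamma(\tau)-1)\to 0$. By \eqref{eq:CDF}, this quantity equals
\[
-\log(1+\tau)\,\lambda_b\sum_{k=1}^{m_0}(-1)^{k+1}\binom{m_0}{k}\frac{1}{G_k(\tau)},
\]
a finite sum with $m_0$ terms, so it suffices to show that $\log(1+\tau)/G_k(\tau)\to 0$ for each fixed $k$.

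From \eqref{eq:Gk} and \eqref{eq:bsk}, $\bar{s}_k(\tau)$ is linear in $\tau$ with a positive coefficient $ka_m/(m\theta_S(1))$. Hence
\[
G_k(\tau)=\lambda_b+C_k\,\tau^{2/\eta},
\]
where $C_k=\frac{4^{1/\eta}\lambda_u}{\sqrt{\pi}}\Gamma\left(1-\frac{2}{\eta}\right)\Gamma\left(\frac12+\frac{2}{\eta}\right)\left(\bar{\sigma}_2^2ka_m/(m\theta_S(1))\right)^{2/\eta}$. For $\eta>2$ both Gamma factors are finite and positive. Moreover $\bar{\sigma}_2^2>0$, $\theta_S(1)>0$ and $a_m>0$, so $C_k>0$. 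Consequently $1/G_k(\tau)\le \tau^{-2/\eta}/C_k$.

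The key step is the elementary limit $\log(1+\tau)\,\tau^{-2/\eta}\to 0$ as $\tau\to\infty$. This holds because the logarithm grows more slowly than any positive power; formally, L'Hôpital's rule gives a ratio behaving like $\tau^{-2/\eta}\cdot\frac{\eta}{2}\to 0$. Each of the finitely many terms therefore vanishes, the binomial coefficients are fixed constants, and the whole limit is zero.

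The only real obstacle is the boundary case $\eta=2$ that the statement includes. There $\Gamma(1-2/\eta)$ diverges, so $C_k=+\infty$ formally. I would handle it by taking the limit $\eta\downarrow 2$, or by noting that the interference Laplace transform then vanishes, so $1/G_k\equiv 0$ and the expression is trivially zero. Alternatively, I would simply restrict to $\eta>2$, which the system model already assumes. It is also worth noting that positivity of $\tau^{2/\eta}$ for any finite $\eta$ is the only growth property needed. The alternating signs cause no trouble because the sum is finite.
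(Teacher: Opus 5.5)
Your proposal is correct and follows the paper's proof: reduce to the finite alternating sum, use $G_k(\tau)=\lambda_b+C_k\tau^{2/\eta}$, and show $\log(1+\tau)/\tau^{2/\eta}\to 0$ by L'H\^opital's rule. Your explicit bound $1/G_k(\tau)\le \tau^{-2/\eta}/C_k$ is slightly more careful than the paper's version, and so is your observation that at $\eta=2$ the factor $\Gamma(1-2/\eta)$ diverges, so $G_k$ is only well defined for $\eta>2$ and the condition $\eta\ge 2$ is not what drives the limit; the paper glosses over that boundary case.
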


\begin{proof}
See Appendix \ref{app:lim}.
\end{proof}

Now, rewrite \(G_k(\tau)\) as
\begin{equation}
G_k(\tau) = \lambda_b+\lambda_uH_k\tau^{\frac{2}{\eta}},
\end{equation}
where
\begin{equation}
H_k = \frac{4^{\frac{1}{\eta}}}{\sqrt{\pi}}\Gamma\left(1-\frac{2}{\eta}\right)\Gamma\left(\frac12+\frac{2}{\eta}\right)\left(\bar{\sigma}_2^2\frac{ka_m}{m\theta_S(1)}\right)^{\frac{2}{\eta}}
\end{equation}
is independent from \(\eta\). Then \(\bar{R}\) can be derived from
\begin{equation}\label{eq:aR}
\bar{R} = \sum_{k=1}^{m_0}(-1)^{k+1}\binom{m_0}{k}L\left(\frac{\lambda_u}{\lambda_b}H_k,\frac{2}{\eta}\right),
\end{equation}
where the integration \(L(a,b)\) is denoted by
\begin{equation}\label{eq:Lab}
L(a,b) = \int_{0}^{+\infty}(1+x)^{-1}(1+ax^b)^{-1}dx.
\end{equation}

It is difficult to obtain a closed-form expression for \(L(a,b)\) with an arbitrary \(b\). However, when the path loss scale \(\eta\) is already known, i.e., \(b = \frac{2}{\eta}\) is already known, the integration \(L\left(a,\frac{2}{\eta}\right)\) can be derived using partial fraction decomposition. In general, the value of \(\eta\) remains relatively stable within a given region. Therefore, the derived expression is practically meaningful and can provide useful design insights.

Denoting the average number of UEs per-cell as
\begin{equation}
\mathbb{E}\{U_c\} = \frac{\lambda_u}{\lambda_b} = \Lambda,
\end{equation}
the average sum-rate of one cell can then be derived as
\begin{equation}\label{eq:Rcell}
\begin{aligned}
\bar{R}_{\mathrm{cell}} &= \mathbb{E}\{U_c\}\bar{R} = \Lambda\bar{R}\\
&= \sum_{k=1}^{m_0}(-1)^{k+1}\binom{m_0}{k}\int_{0}^{+\infty}\frac{1}{1+x}\times\frac{1}{\frac{1}{\Lambda}+H_kx^{\frac{2}{\eta}}}dx.
\end{aligned}
\end{equation}

Now we can discuss the impact of \(\Lambda\) on \(\bar{R}_{\mathrm{cell}}\). From \eqref{eq:Rcell}, we know that \(\bar{R}_{\mathrm{cell}}\) increases monotonically with \(\Lambda\), showing that the sum-rate per-cell improves as the number of UEs grows. However, this growth will eventually reach saturation.

\begin{corollary}\label{coro:C1}
When \(\Lambda\to\infty\), the sum-rate per-cell satisfies
\begin{equation}
\bar{R}_{\mathrm{cell}} \to  \bar{R}_{\mathrm{cell},asy} =\frac{\pi}{\sin\left(\frac{2\pi}{\eta}\right)}\sum_{k=1}^{m_0}(-1)^{k+1}\binom{m_0}{k}\frac{1}{H_k}.
\end{equation}
\end{corollary}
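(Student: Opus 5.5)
Starting from \eqref{eq:Rcell}, the per-cell sum-rate is a finite alternating sum over $k=1,\dots,m_0$ of integrals
\begin{equation*}
T_k(\Lambda)=\int_{0}^{+\infty}\frac{1}{1+x}\cdot\frac{1}{\frac{1}{\Lambda}+H_kx^{\frac{2}{\eta}}}\,dx .
\end{equation*}
Since the sum is finite and the coefficients $(-1)^{k+1}\binom{m_0}{k}$ do not depend on $\Lambda$, it suffices to prove, for each fixed $k$, that
\begin{equation*}
T_k(\Lambda)\to T_k(\infty)=\frac{1}{H_k}\int_0^{+\infty}\frac{x^{-2/\eta}}{1+x}\,dx
\end{equation*}
as $\Lambda\to\infty$, and then to evaluate that limit in closed form.

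\textbf{Passing to the limit.} For each $x>0$ the integrand increases monotonically in $\Lambda$, because $1/\Lambda$ decreases, and converges pointwise to $\frac{1}{H_k}\frac{x^{-2/\eta}}{1+x}$. The monotone convergence theorem then gives $T_k(\Lambda)\uparrow T_k(\infty)$, provided the limit is finite. Alternatively, dominated convergence applies with dominating function $\frac{x^{-2/\eta}}{H_k(1+x)}$. Finiteness is checked at both ends. Near $0$ the integrand behaves like $x^{-2/\eta}$, which is integrable because $2/\eta<1$, i.e.\ $\eta>2$. Near $\infty$ it behaves like $x^{-1-2/\eta}$, which is integrable because $2/\eta>0$. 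This is the only delicate point. The boundary case $\eta=2$ must be excluded (the paper assumes $\eta>2$), since there the limiting integral diverges at the origin and the sum-rate would grow without bound, like $\log\Lambda$. We also need $H_k>0$, which holds because every factor in its definition is positive.

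\textbf{Evaluating the limit.} Apply the standard Beta-function identity
\begin{equation*}
\int_0^{\infty}\frac{x^{s-1}}{1+x}\,dx=B(s,1-s)=\Gamma(s)\Gamma(1-s)=\frac{\pi}{\sin(\pi s)},\qquad 0<s<1 .
\end{equation*}
The identity follows from the substitution $x=t/(1-t)$ followed by Euler's reflection formula. Taking $s=1-\frac{2}{\eta}\in(0,1)$ and using $\sin\!\left(\pi-\frac{2\pi}{\eta}\right)=\sin\!\left(\frac{2\pi}{\eta}\right)$ gives
\begin{equation*}
T_k(\infty)=\frac{\pi}{H_k\sin\!\left(\frac{2\pi}{\eta}\right)} .
\end{equation*}

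Summing over $k$ with the binomial weights yields exactly $\bar{R}_{\mathrm{cell},asy}$. Since the limit exists term by term and the sum is finite, the corollary follows.
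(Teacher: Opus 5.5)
Your proof is correct and follows the same route as the paper. In both, you drop the $1/\Lambda$ term in each integral of \eqref{eq:Rcell} and evaluate $\int_0^\infty x^{-2/\eta}(1+x)^{-1}\,dx = B\left(1-\frac{2}{\eta},\frac{2}{\eta}\right) = \pi/\sin\left(\frac{2\pi}{\eta}\right)$ via Euler's reflection formula. The one difference is that you justify passing the limit inside the integral by monotone or dominated convergence, using $\eta>2$ for integrability at the origin, whereas the paper simply writes this step as an approximation.
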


\begin{proof}
See Appendix \ref{app:C1}.
\end{proof}

The above asymptotic result, \(\bar{R}_{\mathrm{cell},asy}\), can be regarded as an indicator for evaluating potential capacity, which depends on only path-loss scale \(\eta\) and the parameters of FAS, i.e., \(W\) and \(N\). In addition, the convergence speed of \(\bar{R}_{\mathrm{cell}}\) towards \(\bar{R}_{\mathrm{cell},asy}\) can be quantitatively characterized. 

\begin{corollary}\label{coro:C2}
The convergence speed of \(\bar{R}_{\mathrm{cell}}\) towards \(\bar{R}_{\mathrm{cell},asy}\) can be approximately expressed as
\begin{equation}\label{eq:converge}
\bar{R}_{\mathrm{cell}}=\bar{R}_{\mathrm{cell},asy}-\mathcal{O}\left(\Lambda^{\,1-\frac{\eta}{2}}\right),\ \Lambda\to\infty.
\end{equation}
\end{corollary}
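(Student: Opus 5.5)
Write $b=2/\eta\in(0,1)$ and work with each summand of \eqref{eq:Rcell} on its own. Define
\begin{equation*}
D_k(\Lambda)=\int_0^{\infty}\frac{1}{1+x}\left(\frac{1}{H_k x^{b}}-\frac{1}{\frac{1}{\Lambda}+H_k x^{b}}\right)dx
=\int_0^{\infty}\frac{1}{1+x}\,\frac{1/\Lambda}{H_k x^{b}\left(\frac{1}{\Lambda}+H_k x^{b}\right)}dx .
\end{equation*}
The integral $\int_0^\infty (1+x)^{-1}x^{-b}dx=\pi/\sin(\pi b)$ is finite for $0<b<1$. By the proof of Corollary~\ref{coro:C1}, $\bar{R}_{\mathrm{cell},asy}-\bar{R}_{\mathrm{cell}}=\sum_{k}(-1)^{k+1}\binom{m_0}{k}D_k(\Lambda)$. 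So it is enough to show $D_k(\Lambda)=\Theta(\Lambda^{b-1})$ for every $k$. Since $b-1=\frac{2}{\eta}-1$, this gives the claimed rate $\Lambda^{-(1-2/\eta)}$. I read the exponent in \eqref{eq:converge} as meaning this decay; I will check whether $1-\frac{\eta}{2}$ versus $\frac{2}{\eta}-1$ is just a notational discrepancy, since both are negative for $\eta>2$.

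The main step is the substitution that isolates the scale at which the two integrands differ. That happens where $H_k x^{b}\sim 1/\Lambda$, i.e.\ at $x_\ast=(H_k\Lambda)^{-1/b}\to 0$. Put $x=x_\ast u$, so that $H_k x^{b}=u^{b}/\Lambda$. Then
\begin{equation*}
D_k(\Lambda)=\Lambda\, x_\ast\int_0^{\infty}\frac{1}{1+x_\ast u}\,\frac{1}{u^{b}(1+u^{b})}du .
\end{equation*}
The prefactor is $\Lambda x_\ast=H_k^{-1/b}\Lambda^{1-1/b}=H_k^{-\eta/2}\Lambda^{1-\eta/2}$.

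Now check the integral. As $\Lambda\to\infty$ we have $x_\ast\to0$, and it tends to $C_b=\int_0^\infty u^{-b}(1+u^{b})^{-1}du$. Near $u=0$ the integrand behaves like $u^{-b}$, which is integrable because $b<1$. At infinity it behaves like $u^{-2b}$, which is integrable only when $2b>1$, i.e.\ $\eta<4$. \textbf{This is the main obstacle.} For $2<\eta<4$, dominated convergence gives $D_k\sim C_b H_k^{-\eta/2}\Lambda^{1-\eta/2}$, which matches \eqref{eq:converge} exactly and explains the stated exponent. For $\eta\ge4$ the factor $(1+x_\ast u)^{-1}$ is what keeps the tail finite. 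Splitting the integral at $u=1/x_\ast$ then gives a contribution of order $x_\ast^{2b-1}$ (with a logarithm when $\eta=4$), so the decay becomes $\Lambda^{-(1-b)}=\Lambda^{2/\eta-1}$ instead. I would present the $2<\eta<4$ case as the stated result, with $C_b=\frac{\pi}{b\sin(\pi/b)}$ computed via $v=u^{b}$ and the Beta integral. I would add a remark that ``approximately'' covers the regime change, and in every case give the upper bound $D_k=\mathcal{O}(\Lambda^{\max(1-\eta/2,\,2/\eta-1)})$, up to logarithmic factors.

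Finally, summing over $k$ with the finite binomial weights keeps the order unchanged. Cancellations between the terms could only lower the constant, not the rate, so \eqref{eq:converge} follows.
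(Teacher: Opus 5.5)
For $2<\eta<4$ your argument is the paper's own, and it is correct. The paper's substitution $x=\varepsilon_k^{\eta/2}t$ with $\varepsilon_k=1/(\Lambda H_k)$ is exactly your $x=x_\ast u$. It pulls out the same prefactor $H_k^{-\eta/2}\Lambda^{1-\eta/2}$ and lets the rescaled integral tend to $C_b$. You are also right that the paper's justification, that this integral ``converges to a finite constant when $\eta>2$'', fails for $\eta\ge4$.

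\textbf{The error in your $\eta\ge4$ analysis.} Your tail estimate $\int^{1/x_\ast}u^{-2b}\,du\asymp x_\ast^{2b-1}$ is fine. But with the prefactor it gives $\Lambda x_\ast\cdot x_\ast^{2b-1}=\Lambda x_\ast^{2b}=H_k^{-2}\Lambda^{-1}$, because $x_\ast^{b}=(H_k\Lambda)^{-1}$; it does not give $\Lambda^{2/\eta-1}$. Monotone convergence confirms this directly. The quantity $\Lambda D_k=\int_0^\infty (1+x)^{-1}(H_kx^{b})^{-1}(\Lambda^{-1}+H_kx^{b})^{-1}dx$ increases to $H_k^{-2}\int_0^\infty x^{-4/\eta}(1+x)^{-1}dx=\pi/(H_k^{2}\sin(4\pi/\eta))$ for $\eta>4$, and it grows like $2H_k^{-2}\log\Lambda$ at $\eta=4$.

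The alternating sum does not cancel these leading terms. Write $H_k=c\,k^{2/\eta}$. For $2<\eta\le4$ the leading coefficient is a positive multiple of $\sum_{k=1}^{m_0}(-1)^{k+1}\binom{m_0}{k}k^{-1}=\sum_{j=1}^{m_0}j^{-1}$. For $\eta>4$ it is a positive multiple of $\sum_{k=1}^{m_0}(-1)^{k+1}\binom{m_0}{k}k^{-4/\eta}=\frac{1}{\Gamma(4/\eta)}\int_0^\infty t^{4/\eta-1}(1-(1-e^{-t})^{m_0})\,dt>0$.

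Hence the gap is $\Theta(\Lambda^{1-\eta/2})$ for $2<\eta<4$, $\Theta(\Lambda^{-1}\log\Lambda)$ at $\eta=4$, and $\Theta(\Lambda^{-1})$ for $\eta>4$. The corollary as stated is therefore false for $\eta\ge4$; for example, at $\eta=6$ it claims $\Lambda^{-2}$ against a true $\Lambda^{-1}$. The word ``approximately'' cannot cover this. The right fix is to add the hypothesis $2<\eta<4$, or to state the rate as $\Lambda^{\max(1-\eta/2,\,-1)}$ with a logarithmic factor at $\eta=4$.

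\textbf{Smaller points to correct.}
Your opening target $\Theta(\Lambda^{b-1})$ is not a notational variant of the statement. Since $(\frac{2}{\eta}-1)-(1-\frac{\eta}{2})=\frac{(\eta-2)^2}{2\eta}>0$, $\Lambda^{2/\eta-1}$ decays strictly more slowly, and it is the true rate for no $\eta>2$.

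For the same reason, your fallback bound $\mathcal{O}(\Lambda^{\max(1-\eta/2,\,2/\eta-1)})$ always reduces to $\mathcal{O}(\Lambda^{2/\eta-1})$. That bound is true but weaker than the corollary in every regime, so drop it.

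Your remark that cancellation ``could only lower the constant, not the rate'' is not valid in general: a vanishing leading coefficient would make the decay faster. This is harmless for an $\mathcal{O}$ bound, and the identities above show it does not happen here.

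Finally, $C_b=\frac{1}{b}\int_0^\infty v^{1/b-2}(1+v)^{-1}dv=\frac{\pi}{b\sin(\pi(1/b-1))}=-\frac{\pi}{b\sin(\pi/b)}$. Your constant therefore has the wrong sign; as written it is negative for $1<1/b<2$.
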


\begin{proof}
See Appendix \ref{app:C2}.
\end{proof}

The results in Corollaries \ref{coro:C1} and \ref{coro:C2} reveal a clear trade-off governed by the path-loss exponent \(\eta\). When \(\eta\) is small, the asymptotic upper bound \(\bar{R}_{\mathrm{cell},asy}\) becomes larger, indicating a higher potential capacity. However, the convergence to this limit is significantly slower, as the gap decays only at the rate \(\mathcal{O}(\Lambda^{1-\frac{\eta}{2}})\). This implies that the system can sustain a high effective load (large \(\Lambda\)) without quickly reaching saturation. In contrast, when \(\eta\) is large, the asymptotic upper bound is smaller, but the convergence is much faster, and the sum-rate saturates quickly. Therefore, the proposed uplink CUMA scheme is particularly advantageous in scenarios with relatively small \(\eta\), where the system operates in a high-load regime and the performance gain can be better exploited.

\vspace{-2mm}
\section{Simulation Results}\label{sec:result}
In this section, we present extensive simulation results to validate our analytical results and compare the proposed uplink CUMA system with several benchmark methods.

\vspace{-2mm}
\subsection{Simulation Setup}
The network model follows the same setup introduced in Section \ref{sec:model}. Specifically, the carrier frequency is set to \(4\) GHz, corresponding to a carrier wavelength of \(\lambda = 7.5\)cm. At each BS, the number of RF chains is chosen as \(\lceil 1.5\Lambda \rceil\), so that the majority of in-cell UEs can be accommodated. Whenever the number of in-cell UEs is larger than the available RF chains, only \(\lceil 1.5\Lambda \rceil\) UEs are randomly scheduled for service, while the remaining in-cell UEs are still assumed to be active and are therefore counted as interference. Also, all fluid antennas are configured as square apertures, i.e., \(W_1 = W_2\), and the numbers of ports along the two dimensions are selected to form an approximately square array, namely \(N_1 \approx N_2\). Unless otherwise specified, the path-loss parameters used in the simulations are set as \(\eta = 2.4\) and \(2.7\), which are taken from the field measurements reported for Austin in \cite{pathloss}.

For Monte-Carlo simluations, two layers of randomness are considered. The first one corresponds to the random realization of BS and UE locations. Since under the PPP assumption the numbers of BSs and UEs over the entire plane are infinite, in simulations, however, we truncate the network to a sufficiently large circle centered at the typical BS and only retain the nodes inside this circle. Nodes outside the circle are neglected because of their large distances and hence weak contributions. In our simulations, the truncation radius is set to a sufficiently large multiple of the average cell size. Specifically, we choose
\begin{equation}
r_{\max} = 20 \times \frac{1}{2\sqrt{\lambda_b}} = 1000\ \text{m},
\end{equation}
in which \( \frac{1}{2\sqrt{\lambda_b}} \) is the average distance from a typical UE to its serving BS. The location realization loop is repeated \(10\) times. This number is not chosen to be very large due to the considerable computational cost, although a larger value would further improve numerical accuracy. The second layer of randomness comes from small-scale fading generation. For each realization of BS and UE locations, the channel coefficients are independently generated over \(10^3\) fading realizations.

Besides the proposed uplink CUMA scheme, three benchmark schemes are considered, namely ZF, full-CSI MMSE, and joint antenna-position and beam optimization (JAPBO). In JAPBO, the BS is assumed to know the CSI of all UEs. In addition to beamformer design, the antenna positions are optimized over the entire set of candidate port locations so as to maximize the resulting SIR. After the antenna positions are determined, the same MMSE beamformer as above is applied. The antenna position update strategy follows from \cite{MA}. Specifically, starting from a random initial configuration, each antenna sequentially examines its four neighboring candidate positions and moves to a new position only if an SIR improvement is obtained. Otherwise, it stays at its current location. This procedure is repeated until convergence. JAPBO incurs a very high computational complexity and is therefore only used as an upper bound for performance evaluation.

The number of antennas of benchmark schemes is set equal to the number of RF chains. Except for JAPBO, all benchmark schemes use fixed-position antennas. For the full-CSI MMSE method, the BS is assumed to have access to the CSI of all UEs, including those outside the cell, which corresponds to an idealized case. The receive beamformer is then designed as
\begin{equation}
\mathbf{w}_u = \left(\mathbf{H}_{\mathrm{in}}\mathbf{H}_{\mathrm{in}}^{H} + \mathbf{H}_{\mathrm{out}}\mathbf{H}_{\mathrm{out}}^{H}\right)^{-1}\widetilde{\mathbf{h}}_u,
\end{equation}
where \(\mathbf{H}_{\mathrm{in}}\) and \(\mathbf{H}_{\mathrm{out}}\) denote the channel matrices of the in-cell and out-of-cell UEs, respectively. 

\vspace{-2mm}
\subsection{Analytical Validation}
Fig.~\ref{fig:CDF} illustrates the CDF of the SIR for different path-loss exponents. We observe that the analytical results in (\ref{eq:CDF}) closely match the empirical curves across the entire range of the SIR threshold \(\tau\). The gap remains small even in the moderate-to-high SIR regime, confirming the effectiveness of the approximations. Although the analytical CDF is derived by using moment matching and Alzer's inequality, it is able to accurately capture the overall distribution of the SIR. As expected, a larger path-loss exponent leads to improved SIR performance due to stronger attenuation of ICI.

\begin{figure}
\centering
\includegraphics[width=.85\linewidth]{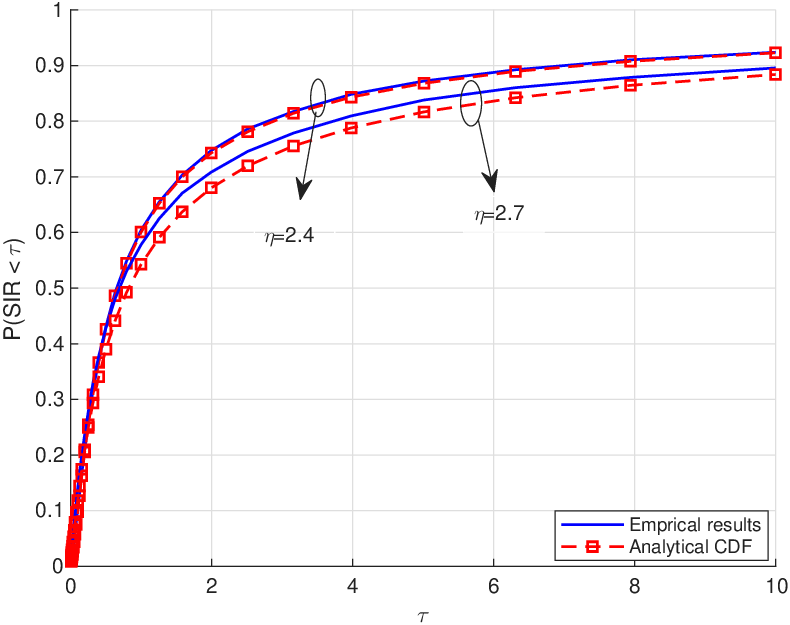}
\caption{Results for the CDF of the SIR.}\label{fig:CDF}
\vspace{-2mm}
\end{figure}

In Fig.~\ref{fig:Sim_L}, we compare the analytical and simulation results in terms of the user average rate as a function of the BS-to-UE density ratio, \(\Lambda\). The analytical results are generated from \eqref{eq:aR}. It is observed that the analytical curves follow the same trend as the simulation results, showing a monotonic decrease of the average rate as \(\Lambda\) increases. This behavior is expected as a higher user density results in stronger intra-cell and ICI as well as increased resource competition. Moreover, the analytical results slightly overestimate the simulation results, which is mainly due to the use of Alzer's inequality and independence assumption in the derivation. In addition, numerical evaluation of the definite integral of (\ref{eq:Lab}) also introduces some errors. Nevertheless, the gap remains relatively small, indicating that the proposed analytical model can provide a reliable approximation of the actual system performance.

\begin{figure}
\centering
\includegraphics[width=.85\linewidth]{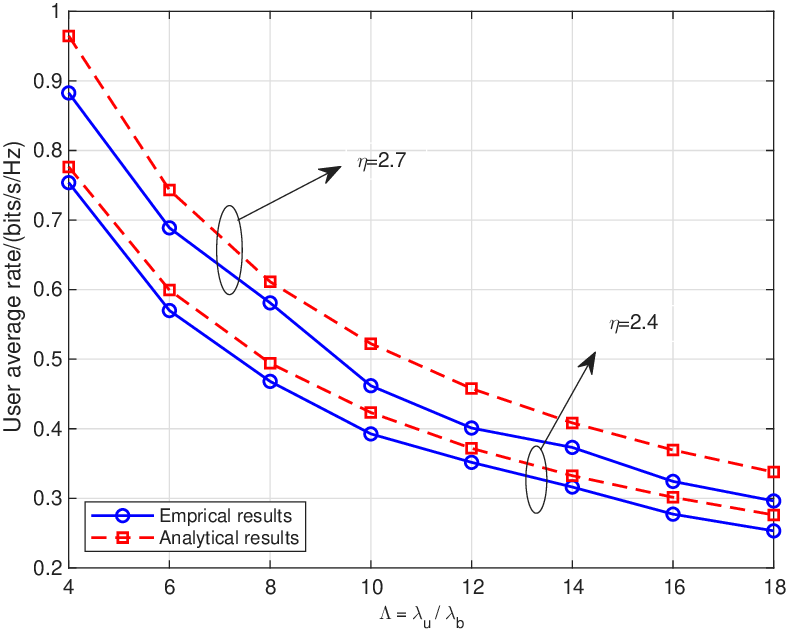}
\caption{Results for the average rate vs.~the BS-UE density ratio, \(\Lambda\).}\label{fig:Sim_L}
\vspace{-2mm}
\end{figure}

Fig.~\ref{fig:asy} validates the asymptotic behavior of the analytical results. To better validate the asymptotic results in Corollaries \ref{coro:C1} and \ref{coro:C2}, relatively large values of the path-loss exponent \(\eta\) are intentionally considered here to provide more diverse behaviors, rather than the practical values used in other figures. As we can see, the analytical results match closely with the limits \(\bar{R}_{\mathrm{cell},asy}\) as \(\Lambda\) increases. Additionally, the convergence speed towards the asymptotic value varies significantly with \(\eta\), consistent with Corollary~\ref{coro:C2}. Specifically, for larger \(\eta = 4\), the convergence speed is about \(\Lambda^{-1}\) according to \eqref{eq:converge}, which matches the observation from Fig.~\ref{fig:asy} that the sum-rate quickly approaches its asymptotic limit. In contrast, for smaller \(\eta = 3.4\) and \(\eta = 2.4\), the convergence speeds are \(\Lambda^{-0.7}\) and \(\Lambda^{-0.2}\), respectively. This confirms that the uplink CUMA scheme is particularly advantageous in cases with relatively small \(\eta\).

\begin{figure}
\centering
\includegraphics[width=.85\linewidth]{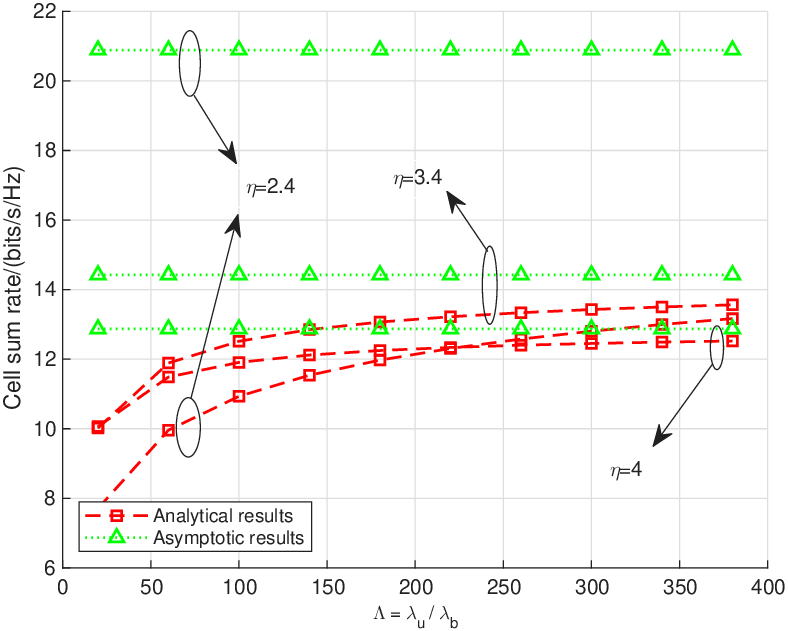}
\caption{Results for the cell sum rate vs.~the BS-UE density ratio, \(\Lambda\).}\label{fig:asy}
\vspace{-2mm}
\end{figure}

\vspace{-2mm}
\subsection{Benchmark Comparison}
Fig.~\ref{fig:vsMMSE_N} shows the cell sum rate as a function of the number of ports \(N\). It can be observed that the performance of ZF and full-CSI MMSE remains almost unchanged as \(N\) increases because both schemes employ a fixed number of antennas determined by the RF chains and hence do not benefit from additional candidate ports. In contrast, CUMA exhibits noticeable fluctuations with respect to \(N\). This behavior originates from the correlation structure in \eqref{eq:Jk}, where the entries depend on Bessel functions, which are inherently oscillatory. As a result, increasing \(N\) does not lead to a monotonic performance improvement but instead introduces fluctuations. Despite this, the additional degrees of freedom introduced by a larger \(N\) can still enhance the overall performance of CUMA. For JAPBO, the performance initially improves as \(N\) increases due to the enlarged search space of candidate antenna positions. However, beyond a certain point, the performance saturates and may even degrade. This is because increasing the number of ports only enhances the spatial resolution of candidate positions without providing additional array gain, while also increasing the likelihood of the local search being trapped in suboptimal configurations. Overall, although JAPBO can achieve slightly higher performance in some regimes, CUMA attains comparable performance with significantly lower complexity and without requiring out-of-cell CSI.

\begin{figure}
\centering
\includegraphics[width=.85\linewidth]{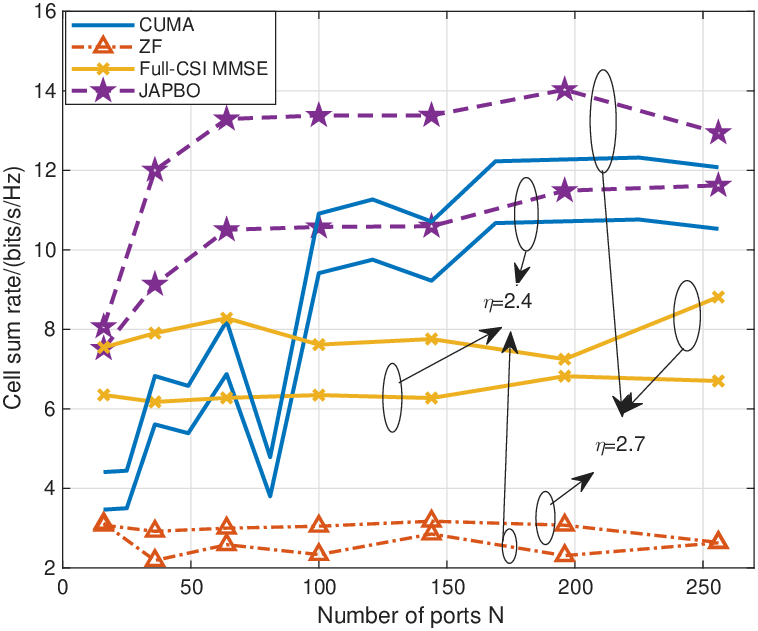}
\caption{Cell sum rate vs.~port count \(N\) with \(\Lambda = 10\), \(W_1 = W_2 = 8\).}\label{fig:vsMMSE_N}
\vspace{-2mm}
\end{figure}

In Fig.~\ref{fig:vsMMSE_L}, we illustrate the cell sum rate versus the BS-to-UE density ratio, \(\Lambda\). As \(\Lambda\) increases, all schemes experience performance degradation due to the increased number of active users and thus the resulting interference. In particular, the advantage of CUMA becomes less pronounced at larger \(\Lambda\). This is because a higher user density reduces the relative impact of ICI, making intra-cell interference the dominant factor. Since CUMA is specifically designed to operate without inter-cell CSI, its relative gain diminishes in this regime. On the other hand, when \(\Lambda\) is small (e.g., \(\Lambda \leq 6\)), CUMA even outperforms JAPBO. This can be attributed to two factors. First, JAPBO is not globally optimal and its performance is inherently limited by the finite number of RF chains. Second, in this regime, ICI dominates and therefore, CUMA can fully exploit its advantages to achieve significant gains. Overall, CUMA still maintains competitive performance compared to the benchmarks while preserving its key advantages in terms of low complexity and reduced CSI requirements.

\begin{figure}
\centering
\includegraphics[width=.85\linewidth]{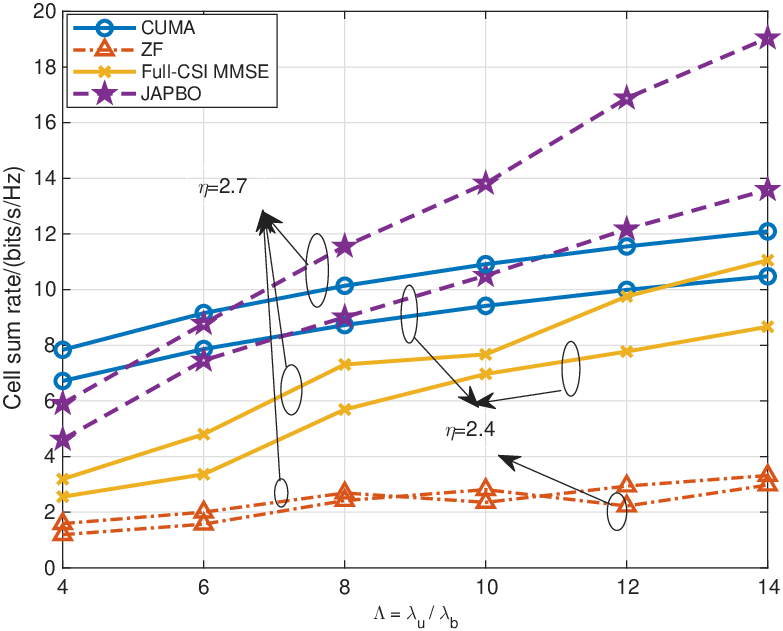}
\caption{Cell sum rate vs.~the BS-UE density ratio, \(\Lambda\), with \(N = 100\), \(W_1 = W_2 = 8\).}\label{fig:vsMMSE_L}
\vspace{-2mm}
\end{figure}

Fig.~\ref{fig:vsMMSE_W} presents the cell sum rate against the FAS width \(W_1\). Similar to the case of varying \(N\), CUMA exhibits pronounced fluctuations as \(W_1\) increases, which again stems from the oscillatory nature of the Bessel function in the correlation model. Compared to the variation with respect to \(N\), the fluctuations with respect to \(W_1\) are even more significant, indicating that the aperture size has a strong impact on the effective channel characteristics. This suggests that simply increasing the FAS size does not necessarily lead to performance improvement. Instead, selecting an appropriate aperture size is crucial for achieving good performance with CUMA. This observation highlights an interesting design trade-off and suggests that the joint optimization of aperture size and port configuration could be a promising direction for future research.

\begin{figure}
\centering
\includegraphics[width=.85\linewidth]{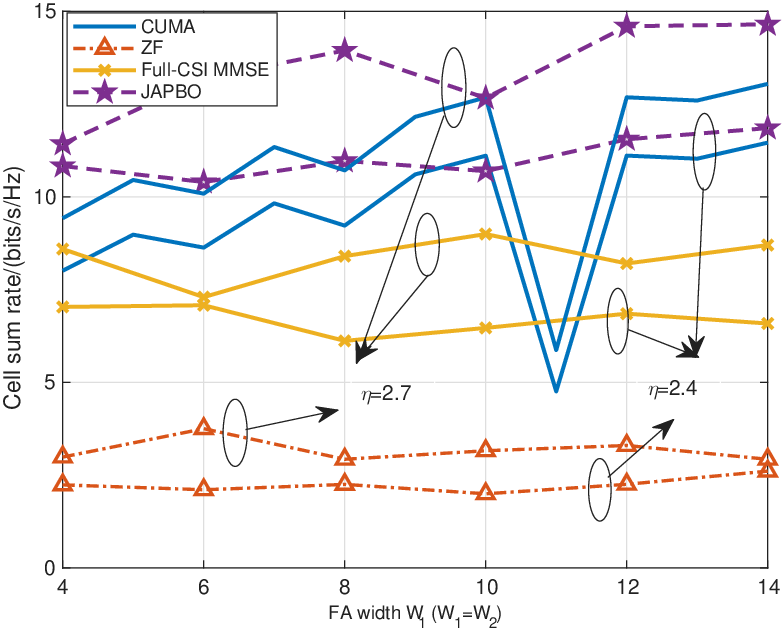}
\caption{Cell sum rate vs.~the FAS width \(W_1\) with \(W_1 = W_2\), \(\Lambda = 10\), \(N = 144\).}\label{fig:vsMMSE_W}
\vspace{-2mm}
\end{figure}

Fig.~\ref{fig:vsMMSE_eta} shows the cell sum rate versus the path-loss exponent \(\eta\). It is observed that the performance of all schemes improves as \(\eta\) increases, due to the stronger attenuation of interference. At the same time, the performance gap between CUMA and the benchmark schemes becomes smaller for larger \(\eta\). This is because, similar to the case of increasing \(\Lambda\), the relative importance of ICI decreases, and intra-cell interference becomes dominant. As such, the advantage of CUMA in handling unknown ICI is less pronounced. Nevertheless, CUMA continues to offer a favorable trade-off between performance, implementation complexity, and CSI acquisition, making it an attractive solution for practical large-scale systems.

\begin{figure}
\centering
\includegraphics[width=.85\linewidth]{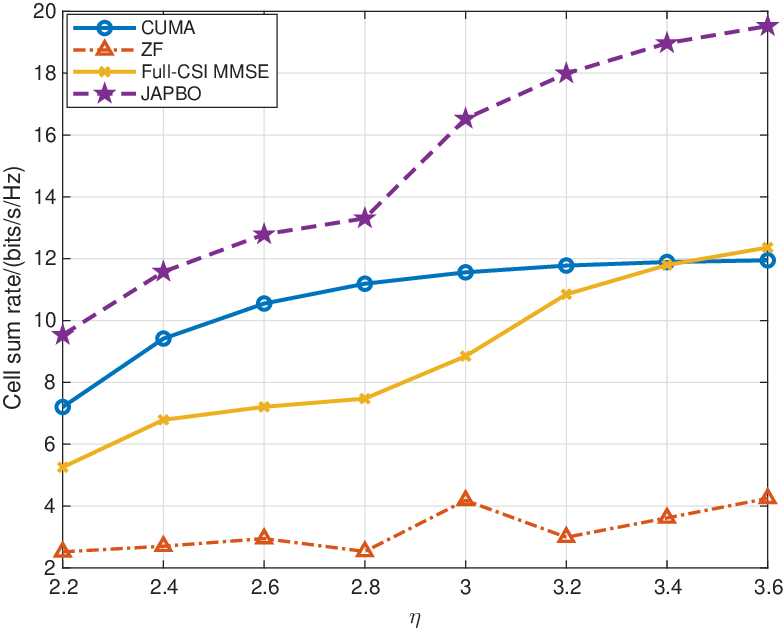}
\caption{Cell sum rate vs.~the path-loss scale \(\eta\) with \(\Lambda = 10\), \(W_1 = W_2 = 8\), \(N = 144\).}\label{fig:vsMMSE_eta}
\vspace{-2mm}
\end{figure}

\vspace{-2mm}
\section{Conclusion}\label{sec:conclude}
In this paper, we investigated CUMA in multi-cell networks. By leveraging a stochastic geometry framework, we developed a tractable analytical approach to characterize the performance of uplink CUMA. In particular, the SIR coverage probability was derived, as well as the average user rate and cell sum-rate. Moreover, the analytical and simulation results demonstrated that uplink CUMA achieves competitive or superior performance compared with some benchmark schemes, and confirmed the potential of uplink CUMA as a low-complexity, low hardware burden and low-CSI uplink reception solution for future large-scale cellular networks. Future work may extend the current analysis to more general channel models, such as scenarios with LoS components or imperfect CSI, and to more advanced system designs, including power control, user scheduling, and hybrid receiver architectures.

\appendices
\section{Proof of Lemma \ref{le:t}}\label{app:t}
From the symmetry of \(X_n\), we know that
\begin{equation}
\mathbb{E}\{t_n\} = \mathbb{P}(X_n > 0) = \frac{1}{2},
\end{equation}
and
\begin{equation}\label{eq:covtktm}
\mathrm{cov}\{t_k,t_m\} = \mathbb{P}(X_k > 0, X_m > 0) - \frac{1}{4}.
\end{equation}
The problem is now converted to obtain \(\mathbb{P}(X_k > 0, X_m > 0)\), i.e., the probability that point \((X_k,X_m)\) is located in the first quadrant in a Cartesian coordinate system. Define two independent variable \(U,V\sim\mathcal{N}(0,1)\), and set
\begin{equation}
\left\{\begin{aligned}
X_k& = U,\\
X_m &= \rho_{k,m}U+\sqrt{1-\rho_{k,m}}V.
\end{aligned}\right.
\end{equation}
It is easy to prove \(X_k,X_m\sim\mathcal{N}(0,1)\) and \(\mathrm{cov}(X_k,X_m) = \rho_{k,m}\). Then the event \(\{X_k > 0, X_m > 0\}\) is equivalent to \(\{U > 0, V > -\alpha_{k,m} U\}\), where
\begin{equation}
\alpha_{k,m} = \frac{\rho_{k,m}}{\sqrt{1-\rho_{k,m}}}.
\end{equation}
Consider the point \((U,V)\) in a Cartesian coordinate system. Denote its coordinate in a Polar coordinate system as \((\Delta,\Theta)\). According to the rotational symmetry of \((U,V)\), \(\Theta\) can be known to be independent with \(\Delta\), and follows the uniform distribution \(\theta\sim \mathcal{U}(-\pi,\pi)\). Then we have
\begin{equation}
\begin{aligned}
\mathbb{P}\{U > 0, V > -\alpha_{k,m} U\} &= \mathbb{P}\left\{-\arctan(\alpha_{k,m})<\theta<\frac{\pi}{2}\right\}\\
&=\frac{\frac{\pi}{2}+\arctan(\alpha_{k,m})}{2\pi}\\
&=\frac{1}{4}+\frac{1}{2\pi}\arcsin(\rho_{k,m}).
\end{aligned}
\end{equation}
Substituting this result into \eqref{eq:covtktm}, the lemma is proved.

\section{Proof of Lemma \ref{le:IA}}\label{app:IA}
Consider a finite integration
\begin{align}
I(A,\Omega_L,\Omega_H)&=\int_{\Omega_L}^{\Omega_H}\left(1-(1+A r^{-\eta})^{-1/2}\right) r\,\mathrm{d}r\notag\\
&=\int_{\Omega_L}^{\Omega_H}rdr-\int_{\Omega_L}^{\Omega_H}\frac{r}{(1+Ar^{-\eta})^{\frac{1}{2}}}\notag\\
&=I_1-I_2.\label{eq:I_LH}
\end{align}
The first term can be easily obtained by
\begin{equation}
I_1 = \frac{\Omega_H^2-\Omega_L^2}{2}.
\end{equation}
To solve the second term, we apply a replacement \(r = t^{-\eta}\) first, which leads to
\begin{equation}
I_2 = \frac{1}{\eta}\int_{\Omega_H^{-\eta}}^{\Omega_L^{-\eta}}\frac{t^{-\frac{2}{\eta}-1}}{(1+At)^{\frac{1}{2}}}.
\end{equation} 
Using the following equation from \cite{ITB}:
\begin{equation}
\int_{0}^{u}\frac{x^{\mu-1}}{(1+\beta x)^{\nu}} = \frac{u^{\mu}}{\mu}{}_2F_1\left(\nu,\mu;1+\mu;-\beta u\right),
\end{equation}
we can express the second term as
\begin{multline}
I_2 = -\frac{\Omega_L^2}{2}{}_2F_1\left(\frac{1}{2},-\frac{2}{\eta};1-\frac{2}{\eta};-A\Omega_L^{-\eta}\right)\\
+\frac{\Omega_H^2}{2}{}_2F_1\left(\frac{1}{2},-\frac{2}{\eta};1-\frac{2}{\eta};-A\Omega_H^{-\eta}\right).
\end{multline}
Then \(I(A,\Omega_L,\Omega_H)\) can be expressed as
\begin{multline}\label{eq:ILH1}
I(A,\Omega_L,\Omega_H) = \frac{\Omega_H^2}{2}\left(1-{}_2F_1\left(\frac{1}{2},-\frac{2}{\eta};1-\frac{2}{\eta};-A\Omega_H^{-\eta}\right)\right)\\
-\frac{\Omega_L^2}{2}\left(1-{}_2F_1\left(\frac{1}{2},-\frac{2}{\eta};1-\frac{2}{\eta};-A\Omega_L^{-\eta}\right)\right).
\end{multline}
When \(\eta>2\) and \(\Omega_H\to\infty\), \(-A\Omega_H^{-\eta}\to 0\), according to the Gauss series of hypergeometric function \cite{DLMF}:
\begin{equation}
{}_2F_1(a,b;c;z) = 1 + \frac{ab}{c}z + O(z^2),~\text{when }z\to 0,
\end{equation}
the first term of \eqref{eq:ILH1} is therefore vanished by
\begin{multline}
\frac{\Omega_H^2}{2}\left(1-{}_2F_1\left(\frac{1}{2},-\frac{2}{\eta};1-\frac{2}{\eta};-A\Omega_H^{-\eta}\right)\right)\\
\stackrel{\Omega_H\to\infty}{\longrightarrow}\frac{\Omega_H^2}{2}O(\Omega_H^{-\eta})\to 0, \eta > 2.
\end{multline}

Next we consider the lower limit when \(\Omega_L\to 0\). Applying the summation of hypergeometric function from \cite{ITB}:
\begin{equation}
\begin{aligned}
&{}_2F_1(a,b;c;z)\\ 
&=\frac{\Gamma(c)\Gamma(b-a)}{\Gamma(b)\Gamma(c-a)}(-z)^{-a}{}_2F_1\left(a,a+1-c;a+1-b;\frac{1}{z}\right)\\
&+\frac{\Gamma(c)\Gamma(a-b)}{\Gamma(a)\Gamma(c-b)}(-z)^{-b}{}_2F_1\left(b,b+1-c;b+1-a;\frac{1}{z}\right).
\end{aligned}
\end{equation}
When \(z\to\infty\), the hypergeometric functions in this formula reach \(1\). Notice that when \(\Omega_L\to 0, -A\Omega_L^{-\eta}\to\infty\). As a result, we have the following asymptotic formula: 
\begin{equation}
\begin{aligned}
&\frac{\Omega_L^2}{2}{}_2F_1\left(\frac{1}{2},-\frac{2}{\eta};1-\frac{2}{\eta};-A\Omega_L^{-\eta}\right)\\ 
&\to\frac{\Omega_L^2}{2}\frac{\Gamma(1-\frac{2}{\eta})\Gamma(-\frac{2}{\eta}-\frac{1}{2})}{\Gamma(-\frac{2}{\eta})\Gamma(\frac{1}{2}-\frac{2}{\eta})}A^{-\frac{1}{2}}\Omega_L^{\frac{\eta}{2}}\\
&+\frac{\Omega_L^2}{2}\frac{\Gamma(1-\frac{2}{\eta})\Gamma(\frac{1}{2}+\frac{2}{\eta})}{\Gamma(\frac{1}{2})\Gamma(1)}A^{\frac{2}{\eta}}\Omega_L^{-2}.
\end{aligned}
\end{equation}
When \(\Omega_L\to 0\), the first term vanishes. Then by substituting the remaining into \eqref{eq:ILH1}, we can obtain the expression of \(I(A) = I(A,0,\infty)\) as \eqref{eq:I_def}.

\section{Proof of Lemma \ref{le:Zk}}\label{app:Zk}
Noticing that
\begin{equation}
s_k(r,\tau) = \frac{ka_m\tau}{m}\frac{1}{\theta_S(r)} \sim \frac{1}{\ell(r)}=r^{\eta},
\end{equation}
we can denote
\begin{equation}
s_k(r,\tau) = \bar{s}_k(\tau)r^{\eta},
\end{equation}
where \(\bar{s}_k(\tau)\) is given by \eqref{eq:bsk} and independent from \(r\). Then \(Z_k\) can be expressed as
\begin{equation}\label{eq:Zk_int}
Z_k = \int_{0}^{+\infty}r\exp\left(-\pi r^2G_k(\tau)\right)dr,
\end{equation}
where \(G_k(\tau)\) is given by \eqref{eq:Gk} and also independent from \(r\). The integration \eqref{eq:Zk_int} can be derived as a closed-form expression via the result from \cite{ITB} as
\begin{equation}
Z_k = \frac{1}{2\pi G_k(\tau)}.
\end{equation}

\section{Proof of Lemma \ref{le:lim}}\label{app:lim}
According to the expression of \(F_{\gamma}(\tau)\), we have
\begin{multline}
\log(1+\tau)F_{\gamma}(\tau) - \log(1+\tau)\\
=-\lambda_b\sum_{k=1}^{m_0}(-1)^{k+1}\binom{m_0}{k}\frac{\log(1+\tau)}{G_k(\tau)}.
\end{multline}
While \(G_k(\tau)\sim \tau^{\frac{2}{\eta}}\), the limit can be derived as
\begin{equation}\label{eq:limitation}
\lim\limits_{\tau\to\infty}\frac{\log(1+\tau)}{G_k(\tau)}= \lim\limits_{\tau\to\infty}\frac{\log(1+\tau)}{\tau^{\frac{2}{\eta}}}
=\lim\limits_{\tau\to\infty}\frac{\tau^{1-\frac{2}{\eta}}}{1+\tau}.
\end{equation}
When \(\eta \geq 2\), \(0 \leq 1-\frac{2}{\eta} < 1\), and thus the result of \eqref{eq:limitation} is known to be zero, which proves the lemma.

\section{Proof of Corollary \ref{coro:C1}}\label{app:C1}
When \(\Lambda\to\infty\), \eqref{eq:Rcell} is derived as
\begin{equation}\label{eq:Rcell0}
\bar{R}_{\mathrm{cell}} \approx \sum_{k=1}^{m_0}(-1)^{k+1}\binom{m_0}{k}\frac{1}{H_k}\int_{0}^{+\infty}(1+x)^{-1}x^{-\frac{2}{\eta}}dx.
\end{equation}
The integral can be expressed by Beta-function as
\begin{equation}
\int_{0}^{+\infty}(1+x)^{-1}x^{-\frac{2}{\eta}}dx = B\left(1-\frac{2}{\eta},\frac{2}{\eta}\right) = \frac{\pi}{\sin(\frac{2\pi}{\eta})}.
\end{equation} 
The second step uses the Euler's reflection formula \cite{ITB}. Then by substituting this integral into \eqref{eq:Rcell0}, \eqref{eq:Rcell} can be obtained.

\section{Proof of Corollary \ref{coro:C2}}\label{app:C2}
The gap between \(\bar{R}_{\mathrm{cell}}\) and its asymptotic limit is given by
\begin{equation}
\bar{R}_{\mathrm{cell},asy}-\bar{R}_{\mathrm{cell}}=\sum_{k=1}^{m_0}(-1)^{k+1}\binom{m_0}{k}\Delta_k(\Lambda),
\end{equation}
where
\begin{align}
\Delta_k(\Lambda)& =\int_0^\infty \frac{1}{1+x}\left(\frac{1}{H_kx^{\frac{2}{\eta}}}-\frac{1}{\frac{1}{\Lambda}+H_kx^{\frac{2}{\eta}}}\right)\mathrm{d}x\notag\\
& = \frac{1}{\Lambda H_k}\int_0^\infty\frac{1}{1+x}\,\frac{1}{x^{\frac{2}{\eta}}\left(\frac{1}{\Lambda H_k}+x^{\frac{2}{\eta}}\right)}\mathrm{d}x.
\end{align}
Denote \(\varepsilon_k=\frac{1}{\Lambda H_k}\), and perform the change of variable \(x=\varepsilon_k^{\eta/2}t=(\Lambda H_k)^{-\eta/2}t\). Then we obtain
\begin{equation}
\Delta_k(\Lambda)=\frac{1}{H_k}\varepsilon_k^{\,1/\frac{2}{\eta}-1}\int_0^\infty\frac{1}{1+\varepsilon_k^{\eta/2}t}\frac{1}{t^{\frac{2}{\eta}}(1+t^{\frac{2}{\eta}})}\mathrm{d}t.
\end{equation}
Since \(\varepsilon_k\) goes to zero as \(\Lambda\to\infty\), the integral term converges to a finite constant when \(\eta>2\). Therefore, the behavior of gap when \(\Lambda\to\infty\) is dominated by \(\varepsilon_k\) as
\begin{equation}
\Delta_k(\Lambda)=\mathcal{O}\left(\frac{1}{H_k}\varepsilon_k^{\frac{\eta}{2}-1}\right)=\mathcal{O}\left(\Lambda^{1-\frac{\eta}{2}}H_k^{-\frac{\eta}{2}}\right).
\end{equation}
Hence, the overall convergence behavior is expressed as \eqref{eq:converge}.

\bibliographystyle{IEEEtran}

\end{document}